\newtheorem{assumption}{Assumption}
\newtheorem{lemma}{Lemma}
\newtheorem{theorem}{Theorem}
\newcommand{\E}{\mathbf{E}} 
\newcommand{\opt}{\mathrm{opt}} 
\DeclareMathOperator*{\argmax}{argmax}
\DeclareMathOperator*{\tr}{tr} 
\newcommand{\compilefullversion}{true}
	\newcommand{\OnlyInFull}[1]{}
	\newcommand{\OnlyInShort}[1]{#1}
	\newcommand{\OnlyInFull}[1]{#1}%
	\newcommand{\OnlyInShort}[1]{}%
\newcommand{\compilehidecomments}{true}
	\newcommand{\wei}[1]{}
	\newcommand{\zhijie}[1]{}
	\newcommand{\jialin}[1]{}
	\newcommand{\xiaoming}[1]{}
	\newcommand{\wei}[1]{{\color{blue!50!black}  [\text{Wei:} #1]}}
	\newcommand{\zhijie}[1]{{\color{red!60!black} [\text{Zhijie:} #1]}}
	\newcommand{\jialin}[1]{{\color{brown!60!black} [\text{Jialin:} #1]}}
	\newcommand{\xiaoming}[1]{{\color{green!60!black} [\text{Xiaoming:} #1]}}
\title{Online Influence Maximization under the Independent Cascade Model \\with Node-Level Feedback\thanks{The published version of this paper claims there are OIM algorithms with node-level feedback that use standard offline oracles.
However, we later found there are severe bugs in our initial algorithms.
In this latest version, we retrieve most of our results by presenting an OIM algoithm that uses pair oracles.}}
\author{
    Zhijie Zhang,\textsuperscript{\rm 1,2}
    Wei Chen,\textsuperscript{\rm 3}
    Xiaoming Sun,\textsuperscript{\rm 1,2}
    Jialin Zhang\textsuperscript{\rm 1,2,}\thanks{Corresponding author.}
}
\begin{document}

\maketitle

\begin{abstract}
We study the online influence maximization (OIM) problem in social networks, where the learner repeatedly chooses seed nodes to generate cascades, observes the cascade feedback, and gradually learns the best seeds that generate the largest cascade in multiple rounds. In the demand of the real world, we work with \emph{node-level feedback} instead of the common \emph{edge-level feedback} in the literature. The edge-level feedback reveals all edges that pass through information in a cascade, whereas the node-level feedback only reveals the activated nodes with timestamps. The node-level feedback is arguably more realistic since in practice it is relatively easy to observe who is influenced but very difficult to observe from which relationship (edge) the influence comes.
Previously, there is a nearly optimal $\tilde{O}(\sqrt{T})$-regret algorithm for OIM problem under the \emph{linear threshold} (LT) diffusion model with node-level feedback.
It remains unknown whether the same algorithm exists for the \emph{independent cascade} (IC) diffusion model.
In this paper, we resolve this open problem by presenting an $\tilde{O}(\sqrt{T})$-regret algorithm for OIM problem under the IC model with node-level feedback.
\end{abstract}

\section{Introduction}

Social networks have gained great attention in the past decades as a model for describing relationships between humans.
Typically, researchers show great interest in how information, ideas, news, influence, etc spread over social networks, starting from a small set of nodes called \emph{seeds}.
To this end, a variety of diffusion models are proposed to formulate the propagation in reality, and the most well-known ones are the \emph{independent cascade} (IC) model and the \emph{linear threshold} (LT) model \cite{KempeKT03}.
A corresponding optimization problem, known as \emph{influence maximization} (IM), asks how to maximize the influence spread, under a specific diffusion model, by selecting a limited number of ``good'' seeds.
The problem has found enormous applications, including advertising, viral marketing, news transmission, etc.

In the canonical setting, the IM problem takes as input a social network, which is formulated as an edge-weighted directed graph.
The problem is NP-hard but can be well-approximated \cite{KempeKT03}.
For the past decade, more efficient and effective algorithms have been designed \cite{BorgsBCL14,TangXS14,TangSX15}, leading to an almost complete resolution of the problem.
However, the canonical IM is sometimes difficult to apply in practice, as edge parameters of the network are often \emph{unknown} in many scenarios.
A possible way to circumvent such difficulty is to learn the edge parameters from past observed diffusion cascades, and then maximize the influence based on the learned parameters.
The learning task is referred to as \emph{network inference}, and has been extensively studied in the literature \cite{Gomez-RodriguezLK10,MyersL10,Gomez-RodriguezBS11,DuSSY12,NetrapalliS12,AbrahaoCKP13,DaneshmandGSS14,DuSGZ13,DuLBS14,NarasimhanPS15,Pouget-AbadieH15,HeX0L16,ChenSZZ21}.
However, this approach does not take into account the cost of the learning process and fails to balance between exploration and exploitation when future diffusion cascades come.
This motivates the study of \emph{online influence maximization (OIM)} problem considered in this paper.

\begin{table*}[t]
	\centering
	\begin{tabular}{ccccc}
		\hline
		Feedback & Diffusion model & Regret & Pair oracle & Reference \\
		\hline
		Edge-level  & IC & $\widetilde{O}(n^3\sqrt{T})$  & No & \citet{WangC17} \\
		Node-level  & LT & $\widetilde{O}(n^{9/2}\sqrt{T})$ & Yes & \citet{LiKTLC20} \\
		Node-level  & IC & $\widetilde{O}(n^{7/2}\sqrt{T}/\gamma)$ & Yes & Theorem \ref{thm: regret2-IC} \\
		\hline
	\end{tabular}
    \caption{Comparison of results on OIM problems}
    \label{tab: results}
\end{table*}

In OIM, the learner faces an unknown social network and runs $T$ rounds in total.
At each round, the learner chooses a seed set to generate cascades, observes the cascade feedback, and receives the influence value as a reward.
The goal is to maximize the influence values received over $T$ rounds, or equivalently, to minimize the cumulative regret compared with the optimal seed set that generates the largest influence.
The most widely studied feedback in the literature is the \emph{edge-level} feedback \cite{ChenWY13,ChenWYW16,WangC17,WenKVV17,WuLWCW19}, where the learner can observe whether an edge passes through the information received by its start point.
The \emph{node-level} feedback was only investigated very recently in \cite{VaswaniLS15,LiKTLC20}, where the learner can only observe which nodes receive the information at each time step during a diffusion process. In practice, the node-level feedback is more realistic than the edge-level feedback, not only because it reveals less information, but also because it is usually easy to observe who is influenced but very difficult to observe from which edge the influence comes from. For example, in the social network platform, it is easy to learn whether and when a user buys some specific product or service but is difficult to learn based on whose recommendations or comments the user makes such a decision.

In light of this, it is interesting to study the OIM problem with node-level feedback.
For the LT model, \citet{LiKTLC20} recently presents a nearly optimal $\widetilde{O}(\mathrm{poly}(|G|)\sqrt{T})$-regret algorithm, at the cost of invoking the so-called offline pair oracles instead of standard oracles.
For the IC model, it remains unknown whether the same regret bound can be achieved and this has been an interesting open question in the field.


\paragraph{Our contribution.}
In this paper, we resolve the aforementioned open question and present the first $\widetilde{O}(\mathrm{poly}(|G|)\sqrt{T})$-regret algorithm for OIM problem under the IC model with node-level feedback.
Our algorithm also needs to invoke pair oracles since node-level feedback reveals less information.
We compare our result with previous ones in Table \ref{tab: results}.

In the technical part, our main contribution is a novel adaptation of the maximum likelihood estimation (MLE) approach which can learn the network parameters and their confidence ellipsoids based on the node-level feedback.
We believe this technique is of independent interest and may inspire other results in the field.
Besides, we prove the GOM bounded smoothness for the IC model, which is crucial for learning influence functions.
The same property is also shown for the LT model in \cite{LiKTLC20}.

\paragraph{Related work.}
The (offline) influence maximization problem has received great attentions in the past two decades.
We refer interested readers to the surveys of \cite{ChenLC13,LiFWT18} for an overall understanding.

The online influence maximization problem falls into the field of multi-armed bandits (MAB), a prosperous research area that dates back to 1933 \cite{Thompson33}.
In the classical multi-armed \emph{stochastic bandits} \cite{Robbins52,LaiR85}, there is a set of $n$ arms, each of which is associated with a reward specified by some \emph{unknown} distribution.
At each round $t$, the learner chooses an arm and receives a reward sampled from the corresponding distribution.
The goal is to maximize the total expected rewards received over $T$ rounds.
The model was later generalized to the multi-armed \emph{stochastic linear bandits} \cite{AuerCF02}, where each arm is associated with a characteristic vector and its reward is given by the inner product of the vector and an unknown parameter vector.
This model was extensively studied in the literature \cite{DaniHK08,LiCLS10,RusmevichientongT10,Abbasi-YadkoriPS11}.
Further generalizations include \emph{combinatorial} multi-armed bandits (CMAB)  and CMAB with probabilistically triggered arms (CMAB-T) \cite{ChenWY13,ChenWYW16,WangC17}, where a subset of arms, called the \emph{super-arm}, can be chosen, and the reward is defined over super-arms and may be non-linear.
Besides, the arms beyond the chosen super-arm may also be triggered and observed.
CMAB-T is a quite general bandits framework and indeed contains OIM with edge-level feedback as a special case.
However, OIM with node-level feedback does not fit into the CMAB-T framework.

OIM has been studied extensively in the literature.
For edge-level feedback, existing work \cite{ChenWY13,LeiMMCS15,ChenWYW16,WangC17,WenKVV17,WuLWCW19} present both theoretical and heuristic results.
The node-level feedback was first proposed in \cite{VaswaniLS15}.
However, only heuristic algorithms were presented.
Very recently, an $\widetilde{O}(\sqrt{T})$-regret algorithm was presented for the LT model with node-level feedback using pair-oracles in \cite{LiKTLC20}.
However, it remains unknown whether the same result holds for the IC model.

\section{Preliminaries}

\subsection{Notations}

Given a vector $x\in\mathbb{R}^d$, its transpose is denoted by $x^\top$.
The Euclidean norm of $x$ is denoted by $\|x\|$.
For a positive definite matrix $M\in\mathbb{R}^{d\times d}$, the weighted Euclidean norm of $x$ is defined as $\|x\|_M=\sqrt{x^{\top}Mx}$.
The minimum eigenvalue of $M$ is denoted by $\lambda_{\min}(M)$, and its determinant and trace are denoted by $\det[M]$ and $\tr[M]$, respectively.
For a real-valued function $\mu:\mathbb{R}\rightarrow\mathbb{R}$, its first and second derivatives are denoted by $\dot{\mu}$ and $\ddot{\mu}$, respectively.


\subsection{Social Network}

A social network is a weighted directed graph $G=(V,E)$ with a node set $V$ of $n=|V|$ nodes and an edge set $E$ of $m=|E|$ edges.
Each edge $e\in E$ is associated with a weight or probability $p(e)\in[0,1]$.
The edge probability vector is then denoted by $p=(p(e))_{e\in E}$, which describes the graph completely.
For a node $v\in V$, let $N(v)=N^{in}(v)$ be the set of in-neighbors of $v$ and $d_v=|N(v)|$ be its in-degree.
The maximum in-degree of the graph is denoted by $D=\max_{v\in V} d_v$.
In this paper, we use $E_v$ to denote the set of incoming edges of $v$ and $p_v=(p(e))_{e\in E_v}\in [0,1]^{d_v}$ to denote the probability vector corresponding to these edges.
The $e$-th entry of $p_v$ is denoted by $p_v(e)$.
Thus, $p(e)$ and $p_v(e)$ refers to the same edge probability and we will use them interchangeably throughout the paper.
For an edge $e=(u,v)\in E_v$, we use $e_{uv}$ to explicitly indicate $e$'s endpoints.
Let $\chi(e_{uv})\in\{0,1\}^{d_v}$ be the characteristic vector of $e_{uv}$ over $E_v$ such that all entries of $\chi(e_{uv})$ are $0$ except that the entry
corresponding to $e_{uv}$ is $1$.
The characteristic vector of a subset $E'\subseteq E_v$ is then defined as $\chi(E')\coloneqq \sum_{e\in E'}\chi(e)\in\{0,1\}^{d_v}$.
For simplicity, we define $x_e\coloneqq \chi(e)$.

\subsection{Offline Influence Maximization}

The input of the offline problem is a social network, over which the information spreads.
A node $v\in V$ is called \emph{active} if it receives the information and \emph{inactive} otherwise.
We first describe the independent cascade (IC) diffusion model.

In the IC model, the diffusion proceeds in discrete time steps $\tau=0,1,2,\cdots$.
At the beginning of the diffusion ($\tau=0$), there is an initially active set $S_0$ of nodes called \emph{seeds}.
For $\tau\geq 1$, the active node set $S_{\tau}$ after time $\tau$ is generated as follows.
First, let $S_{\tau}=S_{\tau-1}$.
Next, for each $v\in V\setminus S_{\tau-1}$, every node $u\in N(v)\cap (S_{\tau-1}\setminus S_{\tau-2})$ will try to \emph{activate} $v$ \emph{independently} with probability $p(e_{uv})$ (let $S_{-1}=\emptyset$).
Hence, $v$ will be activated with probability $1-\prod_{u\in N(v)\cap (S_{\tau-1}\setminus S_{\tau-2})}(1-p(e_{uv}))$ and be added into $S_{\tau}$ once being activated.
The diffusion terminates if $S_{\tau}=S_{\tau-1}$ for some $\tau$ and therefore it proceeds in at most $n$ time steps.
Let $(S_0,S_1,\cdots,S_{n-1})$ be the sequence of the active node sets during the diffusion process, where $S_{\tau}$ denotes the active node set after time $\tau$.

Given a seed set $S_0$, the \emph{influence spread} of $S_0$ is defined as $\sigma(S_0)=\E[|S_{n-1}|]$, i.e.~the expected number of active nodes by the end of the diffusion. Here, $\sigma:2^V\rightarrow\mathbb{R}_+$ is called the \emph{influence spread function}.
In this paper, we also use $\sigma(S,p)$ to state the edge probability vector $p$ explicitly.
The influence maximization (IM) problem takes as input the social network $G$ and an integer $K\in\mathbb{N}_+$, and requires to find the seed set $S^{\opt}$ that gives the maximum influence spread with at most $K$ seeds, i.e. $S^{\opt} \in \argmax_{S\subseteq V, |S|\leq K} \sigma(S)$.
It is well-known that the IM problem admits a $(1-1/e-\epsilon)$ approximation under the IC model \cite{KempeKT03}, which is tight assuming P $\neq$ NP \cite{Feige98}.

\subsection{Online Influence Maximization}

In the online influence maximization problem (OIM) considered in this paper, there is an underlying social network $G=(V,E)$, whose edge parameter vector $p^*$ is unknown initially.
At each round $t$ of total $T$ rounds, the learner chooses a seed set $S_t$ with cardinality at most $K$, observes the cascade feedback, and updates her knowledge about the parameter $p^*$ for later selections.
The feedback considered in this paper is node-level feedback, which means that the learner observes a realization of the sequence of active nodes $(S_{t,0},S_{t,1},\cdots,S_{t,n-1})$ after selecting $S_{t,0}=S_t$.

In order to solve OIM problem, oracles for offline IM problem are often invoked.
Such an oracle takes as input the edge probability vector and outputs a good approximate solution for IM problem.
However, when node-level feedback is used, both \citet{LiKTLC20} and this paper can only guarantee that the true edge probability vector falls into a confidence region.
Thus, we need the so-called \emph{pair oracle} which takes as input the confidence region and can still find a good solution.

Formally, denote by $p^*$ the true edge probability vector and by $\mathcal{C}\in\mathbb{R}^m$ a confidence region satisfying $p^*\in\mathcal{C}$.
Let $\mathtt{ORACLE}$ be a pair-oracle which solves the problem $\max_{S:|S|\leq K,p\in\mathcal{C}} \sigma(S,p)$ and $(\widetilde{S},\tilde{p})=\mathtt{ORACLE}(G,K,\mathcal{C})$ be its output.
Define $S^{\opt}\in\argmax_{S:|S|\leq K} \sigma(S,p^*)$ to be the optimal seed set.
For $\alpha,\beta\in[0,1]$, we say $\mathtt{ORACLE}$ is an $(\alpha,\beta)$-pair-oracle if $\Pr[\sigma(\widetilde{S},\tilde{p})\geq \alpha\cdot \sigma(S^{\opt},p^*)]\geq\beta$,
where the probability is taking from the possible randomness of $\mathtt{ORACLE}$.
Note that $\mathtt{ORACLE}$ is hard to implement, but this paper mainly focuses on the effectiveness of OIM algorithms and the efficiency is not our concern.

Equipped with an $(\alpha,\beta)$-pair-oracle, the objective of OIM is to minimize the cumulative $(\alpha\beta)$-scaled regret over $T$ rounds:
\begin{align*}
	R(T) &=\E\left[\sum_{t=1}^{T}R_t\right] \\
	&=\E\left[T\alpha\beta\cdot\sigma(S^{\opt},p^*)-\sum_{t=1}^{T}|S_{t,n-1}|\right].
\end{align*}
Due to the additivity of expectation, it is equal to
\[ R(T)=\E\left[T\alpha\beta\cdot\sigma(S^{\opt},p^*)-\sum_{t=1}^{T}\sigma(S_t,p^*)\right]. \]

\section{OIM Algorithm under the IC Model}
\label{section: OIM under IC}

In this section, we present an algorithm for OIM under the IC model with node-level feedback (Algorithm \ref{algo: mle-ucb}).
Our algorithm adopts the canonical \emph{upper confidence bound} (UCB) framework in the bandits problem.
Under the UCB framework, at each round $t$, we first compute an estimate $\hat{p}_{t-1}$ of $p^*$ and a corresponding confidence region $\mathcal{C}_{t-1}$  based on the feedback before round $t$.
Then, a seed set $S_t$ is selected by invoking an $(\alpha,\beta)$-pair-oracle to obtain $(S_t,\tilde{p}_{t-1})$, which satisfies that $\tilde{p}_{t-1}\in \mathcal{C}_{t-1}$ and $|S_{t}|\leq K$.

For OIM with node-level feedback, the key difficulty of applying the UCB framework lies in how to use the node-level feedback collected in the previous rounds to update the estimate of $p^*$.
For each node $v\in V$, Algorithm \ref{algo: mle-ucb} will estimate the probability vector $p^*_v\in[0,1]^{d_v}$ of the incoming edges of $v$ separately.
Note that all $p^*_v$ together form $p^*$.

We first explain how to extract information on $p^*_v$ from the feedback $(S_{t,0},S_{t,1},\ldots,S_{t,n-1})$ at round $t$.
When $t>T_0$, the data is processed in a more economical way.
Assume that node $v$ remains inactive after time $\tau$ and some of its neighbors $(S_{t,\tau}\setminus S_{t,\tau-1})\cap N(v)\neq \emptyset$ was newly activated in time $\tau$.
Then, these neighbors will try to activate node $v$ in time $\tau+1$.
Let
\[ E' \coloneqq\{e_{uv}\in E_v\mid u\in (S_{t,\tau}\setminus S_{t,\tau-1})\cap N(v)\} \]
be the set of edges which point from these neighbors to node $v$.
By the diffusion rule of the IC model, the probability that $v$ is activated by them in time $\tau+1$ is
\[ 1-\prod_{e\in E'}(1-p^*(e)). \]
If node $v$ did become active in time $\tau+1$, then we use data pair $(\chi(E'),1)$ to record this event.
Otherwise, we use data pair $(\chi(E'),0)$ to record the event that $v$ remained inactive in time $\tau+1$.
By inspecting each time step of the diffusion till $v$ became active or no new neighbors of $v$ were activated, we are able to construct $J_{t,v}$ data pairs accordingly, denoted by $(X_{t,j,v},Y_{t,j,v})$, $1\leq j\leq J_{t,v}$.
Here, $J_{t,v}\leq d_v$, since $v$ has $d_v$ neighbors and a new data pair is constructed only when some inactive neighbors of $v$ become active.
$X_{t,j,v}\in\{0,1\}^{d_v}$ indicates the characteristic vector of the edges corresponding to the $j$-th batch of neighbors that were activated.
$Y_{t,j,v}\in\{0,1\}$ indicates if $v$ was activated by these neighbors.
It is easy to see when $j<J_{t,v}$, $Y_{t,j,v}=0$, and only when $j=J_{t,v}$, it is possible for $Y_{t,j,v}$ to be $1$.
This is because $v$ will remain active once it is activated.
Though some newly active neighbors of $v$ will still try to ``activate'' $v$ thereafter, it is impossible to observe whether the attempt succeeds.

\begin{algorithm}[tb]
	\caption{IC-UCB}
	\label{algo: mle-ucb}
	\textbf{Input}: Graph $G=(V,E)$, seed set cardinality $K\in\mathbb{N}$, $(\alpha,\beta)$-pair-oracle $\mathtt{ORACLE}$, parameter $\gamma\in(0,1)$ in Assumption \ref{assump: assumption on p*}.
	\begin{algorithmic}[1]
		\STATE Initialize $M_{0,v}\leftarrow \mathbf{0}\in\mathbb{R}^{d_v\times d_v}$ for all $v\in V$, $\delta\leftarrow1/(3n\sqrt{T})$, $R\leftarrow\left\lceil\frac{512D}{\gamma^4}\left(D^2+\ln(1/\delta)\right)\right\rceil$, $T_0\leftarrow nR$ and $\rho\leftarrow\frac{3}{\gamma}\sqrt{\ln(1/\delta)}$.
		\FORALL{$u\in V$}\label{line:initbegin}
		\STATE Choose $\{u\}$ as the seed set for the next $R$ rounds and construct data pairs from observations (see the text in this section for details). \label{line: data pair 1}
		\ENDFOR\label{line:initend}
		\FOR{$t=T_0+1,T_0+2,\cdots,T$}
		\STATE\label{line:update} $\{\hat{\theta}_{t-1,v},\mathcal{C}'_{t-1,v}\}_{v\in V}=$\\Estimate$((S_{k,0},S_{k,1},\ldots, S_{k,n-1})_{1\leq k\leq t-1})$ (see Algorithm~\ref{alg:update}).            
		\STATE Let $\mathcal{C}_{t-1,v}=\{p_v\in[0,1]^{d_v}\mid \theta_v\in\mathcal{C}'_{t-1,v} \}$ and $\mathcal{C}_{t-1}=\{\mathcal{C}_{t-1,v}\}_{v\in V}$
		\STATE Choose $(S_t,\tilde{p}_{t})\in\mathtt{ORACLE}(G,K,\mathcal{C}_{t-1})$ and observe node-level feedback $(S_{t,0},S_{t,1},\cdots, S_{t,n-1})$.
		\ENDFOR
	\end{algorithmic}
\end{algorithm}


\begin{algorithm}[tb]
	\caption{Estimate. Note that the code is written as a computation from scratch in each round
		to accommodate the initialization period of Algorithm~\ref{algo: mle-ucb}, and it can be easily adapted to the incremental computation form.}
	\label{alg:update}
	\textbf{Input}: All observations $(S_{k,0},S_{k,1},\cdots, S_{k,n-1})_{1\leq k\leq t}$ until round $t$.
	\begin{algorithmic}[1]
		\FORALL{$v\in V$}
		\STATE Construct data pairs $(X_{k,j,v},Y_{k,j,v})_{1\leq k\leq t,1\leq j\leq J_{k,v}}$ from observations $(S_{k,0},S_{k,1},\cdots, S_{k,n-1})_{1\leq k\leq t}$
		(see the text in this section for details).	\label{line: data pair 2}
		\STATE \label{line:ltv}$L_{t,v}(\theta_v) \leftarrow \sum_{k=1}^{t}\sum_{j=1}^{J_{k,v}}[-\exp(-X^{\top}_{k,j,v}\theta_v)-(1-Y_{k,j,v})X^{\top}_{k,j,v}\theta_v]$, see eq.~\eqref{eq: pseudo likelihood}.
		\STATE \label{line:theta} $\hat{\theta}_{t,v}\leftarrow \argmax_{\theta_v} L_{t,v}(\theta_v)$.
		\STATE $ M_{t,v}\leftarrow \sum_{k=1}^{t}\sum_{j=1}^{J_{k,v}}X_{k,j,v}X_{k,j,v}^{\top}$, see eq.~\eqref{eq: data matrix}.
		\STATE $\mathcal{C}'_{t,v}\leftarrow\{\theta\in[0,1]^{d_v}\mid \|\theta-\hat{\theta}_{t,v}\|_{M_{t,v}}\leq\rho\}$.
		\ENDFOR
	\end{algorithmic}
\end{algorithm}

For the initial regularization phase (line \ref{line:initbegin} to line \ref{line:initend}) where $t\leq T_0$, the process of extracting information is wasteful in that only the first-step activation is taken into account.
In this part, the algorithm chooses each node $u\in V$ as the seed set for $R$ rounds, and then observes the activation of $u$'s all out-neighbors so as to gather information about its outgoing edges.
More formally, let node $u$ be chosen as the seed in round $t$. In the case $u\in N(v)$, $J_{t,v}=1$ and we construct data pair $(\chi(e_{uv}),1)$ if $v\in S_{t,1}$, or data pair $(\chi(e_{uv}),0)$ if $v\notin S_{t,1}$. In the case $u\notin N(v)$, no data pair is constructed.
By the regularization step, each edge will be observed exactly $R$ times. Intuitively, this step leads to a coarse estimate of each individual probability $p(e)$ for $e\in E$. Technically, this step guarantees a lower bound of the minimum eigenvalue of the Gram matrix $M_{t,v}$ defined in eq.~\eqref{eq: data matrix}, which ensures the correctness of condition \eqref{eq: regular cond} in Theorem \ref{thm: mle} in the analysis.

For each node $v\in V$, we can use all the feedback data $(S_{k,0},S_{k,1},\ldots, S_{k,n-1})_{1\leq k\leq t}$ in the first $t$ rounds to construct data pairs $\{(X_{k,j,v},Y_{k,j,v})\}_{1\leq k\leq t,1\leq j\leq J_{t,v}}$.
The Gram matrix $M_{t,v}$ of these data pairs is defined as
\begin{equation}
	\label{eq: data matrix}
	M_{t,v}\coloneqq \sum_{k=1}^{t}\sum_{j=1}^{J_{k,v}}X_{k,j,v}X_{k,j,v}^{\top}
\end{equation}

Now that we have explained how to extract information on $p^*_v$ by constructing new data pairs from the node-level feedback, we next introduce how to use these data pairs to estimate $p^*_v$.
Inspired by the network inference problem \citep{NetrapalliS12,NarasimhanPS15,Pouget-AbadieH15}, we use the maximum likelihood estimation (MLE) to estimate $p^*_v$.
Algorithm \ref{alg:update} provide a detailed estimation procedure, which has two important features described below.

\emph{Transformation of edge parameter $p$ into parameter $\theta$}.

By the diffusion rule of the IC model, for each $v\in V$, given $X\in\{0,1\}^{d_v}$, let $Y\in\{0,1\}$ indicates whether $v$ is activated in \emph{one time step}. Then,
\[ \E[Y\mid X]=1-\Pi_{e:X(e)=1}(1-p(e)), \]
which a complex function of parameter $p(e)$. We therefore consider a transformation of edge probability vector $p$ into a new vector $\theta$ where
\begin{equation}
	\theta(e)=-\ln(1-p(e)) \mbox{ for each } e\in E.
	\label{eq: para trans}
\end{equation}
Then,
\begin{equation}
	\label{eq: para trans rev}
	p(e)=1-\exp(-\theta(e)) \mbox{ for each } e\in E,
\end{equation}
and
\[ \E[Y\mid X]=\mu(X^{\top}\theta_v), \]
where the \emph{link function} $\mu:\mathbb{R}\rightarrow\mathbb{R}$ is defined as
\[ \mu(x)\coloneqq 1-\exp(-x). \]

This indeed forms an instance of the \emph{generalized linear bandit} (GLB) problem studied in \cite{FilippiCGS10,LiLZ17}.
They also use MLE to solve the GLB problem.
Hence, we will analyze the regret of Algorithm \ref{algo: mle-ucb} via their methods.

\emph{Pseudo log-likelihood function $L_{t,v}$.}

During the update of the estimate of $p^*$ (or $\theta^*$), a standard log-likelihood function is often used:
\begin{align*}
	\mathcal{L}^{std}_{t,v}(\theta_v) &=\sum_{k=1}^{t}\sum_{j=1}^{J_{k,v}}[Y_{k,j,v}\ln\mu(X^{\top}_{k,j,v}\theta_v) \\
	&+(1-Y_{k,j,v})\ln(1-\mu(X^{\top}_{k,j,v}\theta_v))].
\end{align*}
However, the analysis in \cite{FilippiCGS10,LiLZ17} requires that the gradient of the log-likelihood function has the form
\begin{equation}
	\label{eq: gradient}
	\sum_{k=1}^{t}\sum_{j=1}^{J_{k,v}}[Y_{k,j,v}-\mu(X^{\top}_{k,j,v}\theta_v)]X_{k,j,v}.
\end{equation}
Such requirement is met in \citet{FilippiCGS10,LiLZ17} by assuming the distribution of $Y$ conditioned on $X$ falls into some sub-class of the exponential family of distributions, which is however not satisfied in our case.
In this paper, we present an alternative way to overcome such technical difficulty.
That is, we ``integrate'' the gradient in eq.~\eqref{eq: gradient} to obtain a \emph{pseudo log-likelihood} function $L_{t,v}$:
\begin{equation}
	\label{eq: pseudo likelihood}
	\begin{split}
		&L_{t,v}(\theta_v) \\
		&= \sum_{k=1}^{t}\sum_{j=1}^{J_{k,v}}[-\exp(-X^{\top}_{k,j,v}\theta_v)-(1-Y_{k,j,v})X^{\top}_{k,j,v}\theta_v].
	\end{split}
\end{equation}
This ensures that the gradient of $L_{t,v}$ has the form of eq.~\eqref{eq: gradient} and therefore the analysis of \citet{FilippiCGS10,LiLZ17} can be used.
Such an approach is of great independent interest and we leave it as an open problem to find a more intuitive explanation for it.

\section{Regret Analysis}

We now give an analysis of the regret of Algorithm \ref{algo: mle-ucb}.
First, we need to show that for each $v\in V$, the estimate $\hat{\theta}_{t,v}$ is close to the true parameter $\theta^*_v$.
To ensure this, we require Assumption \ref{assump: assumption on p*} below.
\begin{assumption}
	\label{assump: assumption on p*}
	There exists a parameter $\gamma\in (0,1)$ such that $\prod_{u\in N(v)}(1-p^*(e_{uv}))\geq\gamma$ for all $v\in V$.
\end{assumption}
Similar or even stronger assumptions are adopted in all previous approaches for network inference \cite{NetrapalliS12,NarasimhanPS15,Pouget-AbadieH15,ChenSZZ21}.
Assumption \ref{assump: assumption on p*} means that node $v\in V$ will remain inactive with probability at least $\gamma$ even if all of its in-neighbors are simultaneously activated.
It reflects the stubbornness of the agent (node). That is, the behavior of a node is partially determined by its 
intrinsic motivation, not by its neighbors.
So, even when all its neighbors adopt a new behavior, there is a nontrivial 
probability that the node will still not adopt the new behavior.

Under Assumption \ref{assump: assumption on p*}, it is possible to show that $\hat{\theta}_{t,v}$ and $\theta^*_v$ are close to each other in all directions, from which we can obtain a confidence region for $\theta_v^*$, as Theorem \ref{thm: mle} states.
The proof of Theorem \ref{thm: mle} is similar to that of Theorem 1 in \cite{LiLZ17}.
For completeness, we include the proof in
\OnlyInFull{Appendix \ref{section: proof of thm: mle}.}\OnlyInShort{the appendix.}

\begin{theorem}
	\label{thm: mle}
	Suppose that Assumption \ref{assump: assumption on p*} holds.
	For each $v\in V$, $\hat{\theta}_{t,v}$ and $M_{t,v}$ are computed according to Algorithm~\ref{alg:update}.
	Given $\delta\in(0,1)$, if
	\begin{equation}
		\label{eq: regular cond}
		\lambda_{\min}(M_{t,v})\geq\frac{512d_v}{\gamma^4}\left(d_v^2+\ln\frac{1}{\delta}\right).
	\end{equation}
	Then, with probability at least $1-3\delta$, for any $x\in\mathbb{R}^{d_v}$, we have
	\[ |x^{\top}(\hat{\theta}_{t,v}-\theta_v^*)|\leq\frac{3}{\gamma}\sqrt{\ln(1/\delta)}\cdot\|x\|_{M_{t,v}^{-1}}. \]
	Thus, by setting $x=M_{t,v}^{\top}(\hat{\theta}_{t,v}-\theta^*_v)$, we obtain
	\[ \|\hat{\theta}_{t,v}-\theta^*_v\|_{M_{t,v}}\leq \frac{3}{\gamma}\sqrt{\ln(1/\delta)}. \]
\end{theorem}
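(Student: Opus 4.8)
The plan is to treat this as a confidence-ellipsoid bound for the maximum-likelihood estimate of a generalized linear model with link $\mu(x)=1-\exp(-x)$, following the template of \citet{FilippiCGS10,LiLZ17} but adapting it to this particular link and to Assumption~\ref{assump: assumption on p*}. Fix $v$ and $t$ and abbreviate $\theta^*\coloneqq\theta_v^*$, $\hat\theta\coloneqq\hat\theta_{t,v}$, $M\coloneqq M_{t,v}$, $\Delta\coloneqq\hat\theta-\theta^*$, and enumerate all data pairs built for $v$ up to round $t$ as $(X_i,Y_i)_{i=1}^N$. First I would recast Assumption~\ref{assump: assumption on p*} in the $\theta$-coordinates: since $1-p^*(e)=\exp(-\theta^*(e))$, the assumption reads $\exp(-\mathbf{1}^\top\theta^*)\ge\gamma$, i.e.\ $\mathbf{1}^\top\theta^*\le\ln(1/\gamma)$. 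Because every $X_i\in\{0,1\}^{d_v}$ and $\theta^*\ge 0$ coordinatewise, this gives $X_i^\top\theta^*\le\ln(1/\gamma)$, and hence the key pointwise bound $\dot\mu(X_i^\top\theta^*)=\exp(-X_i^\top\theta^*)\ge\gamma$ on the link derivative at the true parameter.

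Next I would use first-order optimality of the MLE. Since the gradient of $L_{t,v}$ has the form in eq.~\eqref{eq: gradient}, the maximizer satisfies $\sum_i[Y_i-\mu(X_i^\top\hat\theta)]X_i=0$. Subtracting $\sum_i\mu(X_i^\top\theta^*)X_i$ from both sides and applying the integral mean value theorem yields the exact identity $G\Delta=Z$, where $Z\coloneqq\sum_i[Y_i-\mu(X_i^\top\theta^*)]X_i$ and $G\coloneqq\sum_i\alpha_i X_iX_i^\top$ with weights $\alpha_i\coloneqq\int_0^1\dot\mu(X_i^\top(\theta^*+s\Delta))\,ds$. Here $Z$ is a sum of bounded martingale differences: conditioned on the history $\E[Y_i\mid X_i]=\mu(X_i^\top\theta^*)$, so each term is mean zero, and $|Y_i-\mu(X_i^\top\theta^*)|\le1$.

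The heart of the argument --- and the step I expect to be the main obstacle --- is the curvature bound $G\succeq\gamma M$. It suffices to show $\alpha_i\ge\gamma$ for each $i$, which needs $\dot\mu(X_i^\top(\theta^*+s\Delta))=\exp(-X_i^\top\theta^*-s\,X_i^\top\Delta)\ge\gamma$ for all $s\in[0,1]$. The term $X_i^\top\theta^*$ is already controlled by the first paragraph, so the difficulty is the excursion $s\,X_i^\top\Delta$: one must know in advance that $\hat\theta$ does not stray far from $\theta^*$. This is exactly where condition~\eqref{eq: regular cond} enters --- a large $\lambda_{\min}(M)$ makes $L_{t,v}$ strongly concave in every direction and pins $\hat\theta$ into a small $M$-neighborhood of $\theta^*$ on which $|X_i^\top\Delta|=O(1)$. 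Concretely, I would first derive a coarse bound on $\Delta$ from $\Delta^\top G\Delta=\Delta^\top Z$ together with monotonicity of $\mu$, a crude lower bound on $\dot\mu$ valid on a compact neighborhood, and the Lipschitzness of $\dot\mu$ (note $|\ddot\mu|\le1$); combining this coarse bound with a concentration estimate on $Z$ and the threshold~\eqref{eq: regular cond} forces the excursion to be small and recovers $\alpha_i\ge\gamma$. This step is circular (closeness of $\hat\theta$ is needed for the curvature bound, while the curvature bound is what certifies closeness) and is resolved by the usual bootstrapping/continuity argument; the concentration events used in it, together with the noise bound below, account for the $1-3\delta$ in the statement.

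Finally I would assemble the pieces. From $G\Delta=Z$, the curvature bound, and Cauchy--Schwarz in the $M$-inner product, $\gamma\|\Delta\|_M^2\le\Delta^\top G\Delta=\Delta^\top Z\le\|\Delta\|_M\,\|Z\|_{M^{-1}}$, so $\|\Delta\|_M\le\gamma^{-1}\|Z\|_{M^{-1}}$ (here $M$ is invertible since \eqref{eq: regular cond} forces $\lambda_{\min}(M)>0$). A self-normalized martingale concentration bound for the bounded, mean-zero increments of $Z$ then gives $\|Z\|_{M^{-1}}\le3\sqrt{\ln(1/\delta)}$ with the remaining probability, whence $\|\Delta\|_M\le\frac{3}{\gamma}\sqrt{\ln(1/\delta)}$. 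The per-direction claim $|x^\top\Delta|\le\|x\|_{M^{-1}}\|\Delta\|_M\le\frac{3}{\gamma}\sqrt{\ln(1/\delta)}\,\|x\|_{M^{-1}}$ for every $x$ is then immediate by Cauchy--Schwarz, and substituting $x=M\Delta$ (so that $x^\top\Delta=\|x\|_{M^{-1}}\|\Delta\|_M=\|\Delta\|_M^2$) reproduces the norm bound exactly as stated in the theorem.
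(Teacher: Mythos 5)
Your setup---the reparametrization into $\theta$, the bound $\dot\mu(X_i^\top\theta^*)=\exp(-X_i^\top\theta^*)\ge\gamma$ from Assumption \ref{assump: assumption on p*}, the first-order condition recast as $G\Delta=Z$ via the mean value theorem, and the bootstrapped consistency argument for the curvature bound---matches the paper's Step 1, which carries out the bootstrap via an injection lemma of Chen--Hu--Ying together with the bound $\|Z\|_{M^{-1}}\le 4\sqrt{d_v+\ln(1/\delta)}$ (Lemma \ref{lemma: G(theta)'s bound 2}). The genuine gap is in your final assembly. You derive $\|\Delta\|_M\le\gamma^{-1}\|Z\|_{M^{-1}}$ and then invoke a concentration bound $\|Z\|_{M^{-1}}\le 3\sqrt{\ln(1/\delta)}$; no such dimension-free bound holds. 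Since $Z=\sum_i\epsilon_iX_i$ is a genuinely $d_v$-dimensional noise vector, $\|Z\|_{M^{-1}}^2$ concentrates at scale $d_v$ (take the $X_i$ to range over the standard basis vectors: then $\|Z\|_{M^{-1}}^2$ is a sum of $d_v$ independent normalized squares), so the best available self-normalized bound is of order $\sqrt{d_v+\ln(1/\delta)}$---exactly what the paper's Lemma \ref{lemma: G(theta)'s bound 2} gives. Your route therefore yields $\|\Delta\|_M=O\bigl(\gamma^{-1}\sqrt{d_v+\ln(1/\delta)}\bigr)$ and, after Cauchy--Schwarz, a per-direction bound that is a factor $\sqrt{d_v}$ worse than the theorem claims.

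The paper avoids this loss by proving the per-direction bound directly rather than passing through the norm. It writes $x^\top\Delta=x^\top H^{-1}Z-x^\top H^{-1}E(H+E)^{-1}Z$, where $H=\sum_i\dot\mu(X_i^\top\theta^*)X_iX_i^\top$ and $E$ is the mean-value correction. For a fixed direction $x$, the leading term $x^\top H^{-1}Z=\sum_i(x^\top H^{-1}X_i)\epsilon_i$ is a \emph{scalar} sub-Gaussian sum, so Hoeffding gives $O(\gamma^{-1}\sqrt{\ln(1/\delta)})\,\|x\|_{M^{-1}}$ with no dimension factor; the $\sqrt{d_v}$-sized quantity $\|Z\|_{M^{-1}}$ appears only in the second-order term, which is controlled by $\|H^{-1/2}EH^{-1/2}\|$ (Lemma \ref{lemma: a technical bound}) and suppressed below $\sqrt{\ln(1/\delta)}$ precisely by the eigenvalue condition \eqref{eq: regular cond}. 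Your single Cauchy--Schwarz step $\Delta^\top Z\le\|\Delta\|_M\|Z\|_{M^{-1}}$ discards this directional structure, and the stated constant cannot be recovered from it. (A smaller point: on the neighborhood $\mathcal{B}_\eta$ you only get $\alpha_i\ge\gamma/(1+\epsilon)$ rather than $\gamma$, since $X_i^\top\hat\theta$ may slightly exceed $\ln(1/\gamma)$; the paper absorbs this by its choice of $\epsilon$ in the constants.)
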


After we prove that $\hat{\theta}_{t,v}$ and $\theta^*_v$ are indeed close to each other, we need to show that the influence functions $\sigma(S,\hat{p})$ and $\sigma(S,p^*)$ induced by the corresponding probability vectors $\hat{p}_{t}$ and $p^*$ are also close.
To this end, we prove the \emph{group observation modulated} (GOM) bounded smoothness condition for the IC model.
The condition is inspired by the GOM condition for the LT model \cite{LiKTLC20}.
We remark that for edge-level feedback, there is a related \emph{triggering probability modulated} (TPM) bounded smoothness condition \cite{WangC17,WenKVV17}.
However, the TPM condition does not suffice for node-level feedback.

We now state the GOM condition formally.
Given a seed set $S\subseteq V$ and a node $v\in V\setminus S$, we say node $u\in V\setminus S$ is \emph{relevant} to node $v$ if there is a path $P$ from $S$ to $v$ such that $u\in P$.
Let $V[S,v]\subseteq V$ be the set of nodes relevant to $v$ given seed set $S$.
Given diffusion cascade $(S_0=S, S_1,\cdots, S_{n-1})$, construct data pairs $\{(X_{j,v},Y_{j,v})\}_{1\leq j\leq J_v}$ according to the economical way described previously (not the wasteful way for the regularization phase).
We have the following GOM condition for the IC model, whose proof is presented in
\OnlyInFull{Appendix \ref{section: proof of lemma: TPM}.}\OnlyInShort{the appendix.}

\begin{lemma}[GOM bounded smoothness for the IC model]
	\label{lemma: TPM}
	Fix any seed set $S\subseteq V$.
	For any two edge-probability vectors $\tilde{p},\,p^*\in[0,1]^{|E|}$, let $\tilde{\theta},\,\theta^*$ be the vectors defined as eq.~\eqref{eq: para trans}.
	Then,
	\begin{align*}
		&|\sigma(S, \tilde{p})-\sigma(S,p^*)| \\
		&\leq\sum_{v\in V\setminus S}\sum_{u\in V[S,v]}\E\left[\sum_{j=1}^{J_{u}}\left|X_{j,u}^{\top}(\tilde{\theta}_u-\theta^*_u)\right|\right],
	\end{align*}
    where the expectation is taken over the randomness of the diffusion cascade $(S_0,S_1,\cdots, S_{n-1})$, which is generated with respect to parameter $p^*$.
\end{lemma}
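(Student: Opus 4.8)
The plan is to reduce the influence-spread difference to a sum of per-node activation-probability differences, and then to control each such difference by a \emph{coupling} argument that localizes the discrepancy to the activation batches at relevant nodes. Write $a_p(v)$ for the probability that $v$ ends up active when the cascade is run with edge parameter $p$ and seed set $S$ (a function of $p$, equivalently of $\theta$). Since $\sigma(S,p)=\sum_{v}a_p(v)$, seeds are always active, and the terms for $v\in S$ cancel, I would first record
\[ \sigma(S,\tilde p)-\sigma(S,p^*)=\sum_{v\in V\setminus S}(a_{\tilde p}(v)-a_{p^*}(v)), \]
so it suffices to bound $|a_{\tilde p}(v)-a_{p^*}(v)|$ for each fixed target $v$ and then sum over $v$. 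A further reduction is that $a_p(v)$ depends only on the incoming edges of the nodes in $V[S,v]$, because every node that can causally affect $v$ lies on an $S$-to-$v$ path and hence in $V[S,v]$; this is exactly what makes the set $V[S,v]$ appear on the right-hand side.

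For a fixed $v$ I would run a \emph{hybrid} (telescoping) argument over the relevant nodes. Enumerate $V[S,v]=\{u_1,\dots,u_L\}$ and define hybrid parameter vectors $p^{(0)},\dots,p^{(L)}$ (with transforms $\theta^{(0)},\dots,\theta^{(L)}$) where $p^{(i)}$ uses $\tilde p$ on the incoming edges of $u_1,\dots,u_i$ and $p^*$ elsewhere, so that $a_{p^{(0)}}(v)=a_{p^*}(v)$ and $a_{p^{(L)}}(v)=a_{\tilde p}(v)$ by the locality just noted. Telescoping gives $a_{\tilde p}(v)-a_{p^*}(v)=\sum_{i=1}^{L}(a_{p^{(i)}}(v)-a_{p^{(i-1)}}(v))$, and the two profiles in the $i$-th term differ only on the edges entering $u_i$. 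I would bound each term by coupling the two corresponding cascades with shared randomness realized at the level of \emph{batches}: when a group of in-neighbors of a node $w$ that became active at the previous step attempts to activate $w$, with incoming-edge vector $X=\chi(E')$, I draw one uniform variable and declare the attempt successful iff it lies below $\mu(X^\top\theta_w)$. Since the two profiles agree on every node except $u_i$, the coupled cascades evolve identically unless and until some batch attempt \emph{at} $u_i$ receives different verdicts under $\theta^{(i)}_{u_i}=\tilde\theta_{u_i}$ and $\theta^{(i-1)}_{u_i}=\theta^*_{u_i}$; hence $|a_{p^{(i)}}(v)-a_{p^{(i-1)}}(v)|$ is at most the probability that some batch at $u_i$ diverges.

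It then remains to bound that divergence probability. With the shared-uniform coupling, a single batch with vector $X$ diverges with probability exactly $|\mu(X^\top\tilde\theta_{u_i})-\mu(X^\top\theta^*_{u_i})|$, and since $\mu(x)=1-e^{-x}$ has $\dot\mu(x)=e^{-x}\le 1$ on the whole range $x\ge 0$ (which holds because all $\theta$'s and $X$'s are nonnegative), the mean value theorem gives $|\mu(X^\top\tilde\theta_{u_i})-\mu(X^\top\theta^*_{u_i})|\le|X^\top(\tilde\theta_{u_i}-\theta^*_{u_i})|$. A union bound over the (at most $d_{u_i}$) batches $X_{1,u_i},\dots,X_{J_{u_i},u_i}$ arriving at $u_i$ then yields $|a_{p^{(i)}}(v)-a_{p^{(i-1)}}(v)|\le\E[\sum_{j=1}^{J_{u_i}}|X_{j,u_i}^\top(\tilde\theta_{u_i}-\theta^*_{u_i})|]$. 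Summing the hybrid over $i$, then over $v$, reproduces the claimed bound; the absolute values inside the expectation make the estimate insensitive to the sign of each discrepancy, so I never need signed cancellation and the same bound controls $|\sigma(S,\tilde p)-\sigma(S,p^*)|$ in both directions.

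The step I expect to be the main obstacle is making the per-batch coupling rigorous while forcing the batches at $u_i$ to be distributed as in the \emph{reference} $p^*$-cascade, which is what the right-hand side demands. The difficulty is twofold: in the $i$-th hybrid the background parameters on the other relevant nodes form a mixture of $\tilde p$ and $p^*$ rather than pure $p^*$, and when $G$ contains directed cycles the batches arriving at $u_i$ can themselves depend on $u_i$'s own activation, so perturbing the edges into $u_i$ perturbs the very batches I want to charge against. I would address this by processing the nodes in an order consistent with causality---equivalently, by unrolling the cascade over its discrete time steps $\tau=0,1,\dots,n-1$ into a layered acyclic structure---so that the batches determining $u_i$'s \emph{first} activation depend only on activations occurring strictly earlier in time, which are never influenced by $u_i$ and can be coupled to behave exactly as under $p^*$. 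Verifying that this forward-in-time coupling preserves the $p^*$-distribution of each batch uniformly across the hybrid levels is the technical heart of the argument; the remaining work (the decomposition, the Lipschitz estimate, and the union bounds) is routine once that coupling is in place.
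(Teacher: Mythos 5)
Your high-level reduction (sum over target nodes $v$, localize the discrepancy to the relevant set $V[S,v]$, bound each per-batch divergence by $|\mu(X^\top\tilde\theta_u)-\mu(X^\top\theta^*_u)|\le|X^\top(\tilde\theta_u-\theta^*_u)|$ via $\dot\mu\le1$) matches the paper. But the hybrid-over-nodes decomposition has a genuine gap, and it is exactly the one you flag as ``the technical heart'' without resolving: in the $i$-th telescoping step, the two cascades being compared both run under the hybrid parameter $p^{(i-1)}$ until they first diverge at $u_i$, so the batches you charge at $u_i$ are distributed according to $p^{(i-1)}$ (which uses $\tilde p$ on the incoming edges of $u_1,\dots,u_{i-1}$), not according to $p^*$ as the right-hand side of the lemma requires. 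Your proposed fix --- processing nodes in causal order or unrolling over time --- does not repair this: for a general directed graph (with cycles) there is no enumeration of $V[S,v]$ under which the already-switched nodes $u_1,\dots,u_{i-1}$ are never ancestors of $u_i$, and whenever some $u_{i'}$ with $i'<i$ is upstream of $u_i$, perturbing its incoming edges changes the distribution of the batches arriving at $u_i$. There is no a priori comparison between $\E_{p^{(i-1)}}\bigl[\sum_j|X_{j,u_i}^\top\Delta_{u_i}|\bigr]$ and $\E_{p^*}\bigl[\sum_j|X_{j,u_i}^\top\Delta_{u_i}|\bigr]$, and trying to relate them would require the very smoothness bound you are proving.

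The paper avoids hybrids altogether. It couples the two \emph{full} cascades (pure $p^*$ versus pure $\tilde p$) through a single shared vector $r\sim U[0,1]^m$ of per-edge uniforms, and decomposes the event ``$v$'s final state differs'' by the \emph{first} node $u\in V[S,v]$ and time $\tau$ at which the two trajectories disagree. The point of charging only the first divergence is that, on that event, the common history up to time $\tau-1$ coincides with the pure $p^*$-trajectory; hence the batch $Z$ whose verdict may differ equals the batch $X(u,\tau,r_{-u})$ of the $p^*$-cascade, and after enlarging the conditioning event to one depending on $p^*$ alone, the resulting bound is an honest expectation over the $p^*$-cascade. If you want to keep a telescoping flavor, the decomposition must be over (node, time)-pairs processed so that everything already switched lies strictly \emph{in the causal future} of the pair currently being switched --- which collapses back into the paper's first-divergence argument rather than a hybrid over the nodes of $V[S,v]$.
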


Equipped with the aforementioned tools, we now set about presenting the analysis of Algorithm \ref{algo: mle-ucb}.
Given a seed set $S\subseteq V$ and a node $u\in V\setminus S$, define
\[ n_{S,u}\coloneqq\sum_{v\in V\setminus S}\mathbf{1}\{u\in V[S,v]\} \]
to be the number of nodes that $u$ is relevant to.
Further, define
\[ \zeta(G)\coloneqq \max_{S:|S|\leq K}\sqrt{\sum_{u\in V}n_{S,u}^2}\leq O(n^{3/2}). \]
We present the regret of Algorithm \ref{algo: mle-ucb} in Theorem \ref{thm: regret2-IC}.

\begin{theorem}
	\label{thm: regret2-IC}
    When we use an $(\alpha,\beta)$-pair-oracle in Algorithm \ref{algo: mle-ucb}, under Assumption \ref{assump: assumption on p*}, the $\alpha\beta$-scaled regret of Algorithm \ref{algo: mle-ucb} satisfies that
	\begin{align*}
		R(T)= \widetilde{O}\left(\frac{\zeta(G)D\sqrt{mT}}{\gamma}\right)=\widetilde{O}\left(\frac{n^{7/2}\sqrt{T}}{\gamma}\right).
	\end{align*}
\end{theorem}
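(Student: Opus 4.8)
The plan is to run the standard UCB regret decomposition, but to measure the per-round confidence widths through the GOM smoothness of Lemma~\ref{lemma: TPM} rather than the usual TPM condition, and to close the loop with a per-node elliptical-potential argument. First I would split the horizon into the initialization phase $t\le T_0$ and the UCB phase $t>T_0$, and within the latter condition on the clean event $\mathcal{E}_{t-1}=\{p^*\in\mathcal{C}_{t-1}\}$. The initialization phase picks each singleton seed $R$ times, so for every node $v$ the regularization data pairs are exactly the basis vectors $x_e$ repeated $R$ times, whence $M_{T_0,v}\succeq R\cdot I_{d_v}$. Since $R=\lceil\tfrac{512D}{\gamma^4}(D^2+\ln(1/\delta))\rceil\ge\tfrac{512 d_v}{\gamma^4}(d_v^2+\ln\tfrac{1}{\delta})$, the eigenvalue condition \eqref{eq: regular cond} of Theorem~\ref{thm: mle} holds at $T_0$, and for all $t\ge T_0$ because $M_{t,v}$ only grows. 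The initialization regret is crudely at most $T_0\cdot n=\widetilde{O}(\mathrm{poly}(n)/\gamma^4)$, independent of $T$ up to logarithms, hence a lower-order additive term.

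Next I would observe that $\delta=1/(3n\sqrt T)$ is calibrated precisely so that no union bound over rounds is needed: by Theorem~\ref{thm: mle} and a union bound over the $n$ nodes, $\Pr[\lnot\mathcal{E}_{t-1}]\le 3n\delta=1/\sqrt T$ at each round, and since $|R_t|\le n$, the total contribution of the failure rounds is at most $\sum_{t}n\cdot(1/\sqrt T)=\widetilde{O}(n\sqrt T)$, again lower order. Moreover $\ln(1/\delta)=\widetilde{O}(1)$ keeps both $R$ and $\rho=\tfrac{3}{\gamma}\sqrt{\ln(1/\delta)}$ only logarithmic in $nT$. It then remains to bound the clean-event regret $\E[\sum_{t>T_0}R_t\,\mathbf{1}\{\mathcal{E}_{t-1}\}]$.

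On $\mathcal{E}_{t-1}$ we have $p^*\in\mathcal{C}_{t-1}$, so $S^{\opt}$ is feasible for the pair oracle. Writing $B_t=\mathbf{1}\{\sigma(S_t,\tilde p_{t-1})\ge\alpha\,\sigma(S^{\opt},p^*)\}$ and $\Delta_t=|\sigma(S_t,\tilde p_{t-1})-\sigma(S_t,p^*)|$, the nonnegativity of $\sigma$ gives, in both cases $B_t\in\{0,1\}$, the inequality $\sigma(S_t,p^*)\ge\alpha\,\sigma(S^{\opt},p^*)B_t-\Delta_t$. Taking the conditional expectation and using the oracle guarantee $\E[B_t\mid\mathcal{H}_{t-1}]\ge\beta$ yields $\E[R_t\mid\mathcal{H}_{t-1}]\le\E[\Delta_t\mid\mathcal{H}_{t-1}]$, so the scaled regret is controlled by the influence discrepancy. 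I would then bound $\Delta_t$ with Lemma~\ref{lemma: TPM}: since both $\tilde\theta_{t-1,u},\hat\theta_{t-1,u}\in\mathcal{C}'_{t-1,u}$ and, by Theorem~\ref{thm: mle}, $\|\hat\theta_{t-1,u}-\theta^*_u\|_{M_{t-1,u}}\le\rho$, a Cauchy--Schwarz step in the $M_{t-1,u}$-norm gives $|X_{j,u}^{\top}(\tilde\theta_{t-1,u}-\theta^*_u)|\le 2\rho\,\|X_{j,u}\|_{M_{t-1,u}^{-1}}$. Regrouping the double sum of Lemma~\ref{lemma: TPM} by the relevant node $u$ collapses $\sum_{v}\sum_{u\in V[S_t,v]}$ into $\sum_u n_{S_t,u}(\cdot)$, giving $\E[R_t\mid\mathcal{H}_{t-1}]\le 2\rho\,\E\big[\sum_u n_{S_t,u}\sum_{j}\|X_{t,j,u}\|_{M_{t-1,u}^{-1}}\,\big|\,\mathcal{H}_{t-1}\big]$.

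The summation over $t$ is where I expect the main obstacle. Two Cauchy--Schwarz steps reduce the task: first over $u$, extracting $\sqrt{\sum_u n_{S_t,u}^2}\le\zeta(G)$ and the factor $J_{t,u}\le D$, then over the at most $T$ rounds, extracting $\sqrt T$; what remains is $\sum_{t>T_0}\sum_u\sum_j\|X_{t,j,u}\|_{M_{t-1,u}^{-1}}^2$. The difficulty is that within a single round all data pairs $\{X_{t,j,u}\}_j$ are measured against the \emph{same} matrix $M_{t-1,u}$, so the vanilla elliptical potential lemma does not apply verbatim. I would resolve this using the regularization: because $\lambda_{\min}(M_{t-1,u})\ge R$ while each round contributes an update of trace at most $d_u\le D\ll R$, one obtains $M_{t,u}\preceq 2M_{t-1,u}$, which lets me swap the shared matrix for the genuinely incremental matrices at a constant-factor cost and invoke the elliptical potential lemma per node, yielding $\sum_{t,j}\|X_{t,j,u}\|_{M_{t-1,u}^{-1}}^2=\widetilde{O}(d_u)$ and hence $\widetilde{O}(m)$ after summing over $V$. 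Assembling the pieces gives a clean-event regret of order $\widetilde{O}\big(\rho\,\zeta(G)D\sqrt{mT}\big)=\widetilde{O}(\zeta(G)D\sqrt{mT}/\gamma)$; adding the lower-order initialization and failure terms and substituting $\zeta(G)\le O(n^{3/2})$, $D\le n$, $m\le n^2$ yields the claimed $\widetilde{O}(n^{7/2}\sqrt T/\gamma)$.
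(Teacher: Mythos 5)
Your proposal is correct and follows the same overall architecture as the paper's proof: the same split into initialization, failure-event, and clean-event terms; the same use of the oracle guarantee to reduce $\E[R_t]$ to $\E[|\sigma(S_t,\tilde p)-\sigma(S_t,p^*)|]$; the same application of Lemma~\ref{lemma: TPM} followed by Cauchy--Schwarz and the triangle inequality to get the $2\rho\|X_{t,j,u}\|_{M_{t-1,u}^{-1}}$ widths; and the same regrouping by relevant nodes to produce $\sum_u n_{S_t,u}(\cdot)$. The one place you genuinely diverge is the batched elliptical-potential step, i.e.\ the paper's Lemma~\ref{lemma: upper bound of modified M norm}. The paper handles the fact that all $J_{t,v}\le d_v$ data pairs in a round are measured against the same $M_{t-1,v}$ by raising the determinant recursion to the $d_v$-th power, obtaining $\det[M_{t,v}]^{d_v}\ge\det[M_{t-1,v}]^{d_v}\prod_j(1+z_{t,j}^2)$ and then applying the trace--determinant inequality; this works without any assumption on the size of $M_{t-1,v}$ but pays a $d_v(d_v+1)$ factor in the potential bound. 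You instead exploit the regularization ($\lambda_{\min}(M_{t-1,u})\ge R\ge d_u^2\ge\tr[\sum_j X_{t,j,u}X_{t,j,u}^{\top}]$, so $M_{t,u}\preceq 2M_{t-1,u}$) to reduce to the standard incremental elliptical potential lemma at a constant-factor cost; this is arguably cleaner, saves roughly a $\sqrt{D}$ factor in that step, but is tied to the initialization phase being large enough, whereas the paper's argument is self-contained. Both yield bounds within the claimed $\widetilde{O}(\zeta(G)D\sqrt{mT}/\gamma)$. (One small imprecision: the per-round update has trace at most $d_u^2$, not $d_u$, since there can be up to $d_u$ pairs each of squared norm up to $d_u$; your comparison $M_{t,u}\preceq 2M_{t-1,u}$ still holds because $R\ge D^2$.)
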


\begin{proof}
	Let $\mathcal{H}_t$ be the history of past rounds by the end of round $t$.
	For $t\leq T_0$, $\E[R_t]\leq n$, since there are $n$ nodes in $G$.
	Now consider the case where $t>T_0$.
	By the definition of $R_t$,
	\[ \E[R_t\mid\mathcal{H}_{t-1}]=\E[\alpha\beta\cdot\sigma(S^{\opt},p^*)-\sigma(S_t,p^*)\mid\mathcal{H}_{t-1}], \]
	where the expectation is taken over the randomness of $S_t$.
	
	For any $T_0<t\leq T$ and $v\in V$, define event $\xi_{t-1,v}$ as
	\[ \xi_{t-1,v}\coloneqq\{\|\hat{\theta}_{t-1,v}-\theta^*_v\|_{M_{t-1,v}}\leq \rho \}, \]
	and let $\overline{\xi}_{t-1,v}$ be its complement.
	By the choices of $\delta,R,T_0,\rho$ as in Algorithm \ref{algo: mle-ucb}, the fact that $\lambda_{\min}(M_{t-1,v})\geq \lambda_{\min}(M_{T_0,v})=R$ and Theorem \ref{thm: mle}, we have $\Pr[\overline{\xi}_{t-1,v}]\leq 3\delta$.
	Further define event $\xi_{t-1}\coloneqq\wedge_{v\in V}\,\xi_{t-1,v}$ and let $\overline{\xi}_{t-1}$ be its complement.
	By union bound, $\Pr[\overline{\xi}_{t-1}]\leq 3\delta n$.
	Note that under event $\xi_{t-1}$, for all $v\in V$, $\theta^*_v\in\mathcal{C}'_{t-1,v}$.
	Hence, $p^*_{v}\in \mathcal{C}_{t-1,v}$ and $p^*\in \mathcal{C}_{t-1}$.
	Since $(S_t,\tilde{p}_t)$ is obtained by invoking an $(\alpha,\beta)$-pair-oracle $\mathtt{ORACLE}$ over $\mathcal{C}_{t-1}$, we have
	\begin{align*}
		\E[R_t] &\leq \Pr[\xi_{t-1}]\cdot\E[\alpha\beta\cdot\sigma(S^{\opt}, p^*)-\sigma(S_t,p^*)\mid\xi_{t-1}] \\
		&+\Pr[\overline{\xi}_{t-1}]\cdot n \\
		&\leq \E[\sigma(S_t, \tilde{p}_t)-\sigma(S_t,p^*)\mid\xi_{t-1}]+3\delta n^2.
	\end{align*}
    
    Next, by the GOM bounded smoothness for the IC model in Lemma \ref{lemma: TPM}, we obtain that
    \begin{align*}
    	&\E[R_t]-3\delta n^2 \\
    	&\leq \E\left[\sum_{v\in V\setminus S_t}\sum_{u\in V[S_t,v]}\sum_{j=1}^{J_{t,u}}\bigg|X_{t,j,u}^{\top}(\tilde{\theta}_{t,u}-\theta^*_u)\bigg|\,\bigg|\,\xi_{t-1}\right].
    \end{align*}
    By the Cauchy-Schwarz inequality, we have
    \[ |X_{t,j,u}^{\top}(\tilde{\theta}_{t,u}-\theta^*_u)|\leq \|X_{t,j,u}\|_{M_{t-1,u}^{-1}}\|\tilde{\theta}_{t,u}-\theta^*_u\|_{M_{t-1,u}}. \]
    Besides, under event $\xi_{t-1}$, $\hat{\theta}_{t,u},\theta^*_u\in\mathcal{C}'_{t-1,u}$ for all $u\in V$.
    Then, by the triangle inequality,
    \begin{align*}
    	&\|\tilde{\theta}_{t,u}-\theta^*_u\|_{M_{t-1,u}} \\
    	&\leq \|\tilde{\theta}_{t,u}-\hat{\theta}_{t-1,u}\|_{M_{t-1,u}}+\|\hat{\theta}_{t-1,u}-\theta^*_u\|_{M_{t-1,u}}\leq 2\rho.
    \end{align*}
    Combining the above inequalities, we obtain that
    \begin{align*}
    	&\E[R_t]-3\delta n^2 \\
    	    	&\leq 2\rho\cdot\E\left[\sum_{v\in V\setminus S_t}\sum_{u\in V[S_t,v]}\sum_{j=1}^{J_{t,u}}\|X_{t,j,u}\|_{M_{t-1,u}^{-1}}\right] \\
    	    	&=2\rho\cdot\E\left[\sum_{u\in V\setminus S_t}\sum_{j=1}^{J_{t,u}}\|X_{t,j,u}\|_{M_{t-1,u}^{-1}}\sum_{v\in V\setminus S_t}\mathbf{1}_{u\in V[S_t,v]}\right] \\
    	    	&= 2\rho\cdot \E\left[\sum_{u\in V\setminus S_t} n_{S_t,u}\sum_{j=1}^{J_{t,u}}\|X_{t,j,u}\|_{M_{t-1,u}^{-1}}\right].
    \end{align*}
	
	Recall that the above derivation holds for $t>T_0$, and for $t\leq T_0$, $\E[R_t]\leq n$.
	We thus have
	\begin{align*}
		R(T) &\leq 2\rho\cdot\E\left[\sum_{t=T_0+1}^{T}\sum_{v\in V\setminus S_t}n_{S_t,v}\sum_{j=1}^{J_{t,v}}\|X_{t,j,v}\|_{M_{t-1,v}^{-1}}\right] \\
		&+3\delta n^2(T-T_0)+nT_0.
	\end{align*}

    To further simplify the above inequality, we prove the following lemma, whose proof is presented in \OnlyInFull{Appendix \ref{section: proof of lemma: upper bound of modified M norm}.}\OnlyInShort{the appendix.}
    \begin{lemma}
    	\label{lemma: upper bound of modified M norm}
    	For any $v\in V$,
    	\begin{align*}
    		&\sum_{t=T_0+1}^{T}\sum_{v\in V\setminus S_t}n_{S_t,v}\sum_{j=1}^{J_{t,v}}\|X_{t,j,v}\|_{M_{t-1,v}^{-1}} \\
    		&\leq \zeta(G)D\sqrt{(m+n)(T-T_0)\ln\left(R+(T-T_0)D\right)}.
    	\end{align*}
    \end{lemma}
    By Lemma \ref{lemma: upper bound of modified M norm}, we have
    \begin{align*}
    	&R(T) \\
    	&\leq 2\rho\cdot\E\left[\sum_{t=T_0+1}^{T}\sum_{v\in V\setminus S_t}n_{S_t,v}\sum_{j=1}^{J_{t,v}}\|X_{t,j,v}\|_{M_{t-1,v}^{-1}}\right] \\
    	&+3\delta n^2(T-T_0)+nT_0 \\
    	&\leq 2\rho \zeta(G)D\sqrt{(m+n)(T-T_0)\ln\left(R+(T-T_0)D\right)} \\
    	&+3\delta n^2(T-T_0)+nT_0 \\
    	&\leq \frac{6 \zeta(G)D}{\gamma}\sqrt{(m+n)T\ln(TD)\ln (3nT)}+n\sqrt{T} \\
    	&+\frac{512Dn^2}{\gamma^4}\left(D^2+\ln(3nT)\right)+1 \\
    	&=\widetilde{O}\left(\frac{\zeta(G)D\sqrt{mT}}{\gamma}\right).
    \end{align*}
    The last inequality is obtained by plugging $\delta=1/(3n\sqrt{T})$, $R=\left\lceil\frac{512D}{\gamma^4}\left(D^2+\ln(1/\delta)\right)\right\rceil$, $T_0= nR$ and $\rho=\frac{3}{\gamma}\sqrt{\ln(1/\delta)}$ into the formula.

\end{proof}

We remark that the worst-case regret for the IC model with edge-level feedback is $\widetilde{O}(n^3\sqrt{T})$ in \cite{WangC17}.
Thus, our regret bound under node-level feedback matches the previous ones under edge-level feedback in the worst case, up to a $n^{1/2}/\gamma$ factor.

To get an intuition about $\gamma$'s value, assume that each edge probability $\leq 1-c$ for some constant $c\in(0,1)$.
Then, $\gamma=O(c^D)$, where $D$ is the maximum in-degree of the graph.
Thus, in the worst case, $1/\gamma$ is exponential in $n$.
But when $D=O(\log n)$, $1/\gamma$ is polynomial in $n$ and so is the regret bound.
We think $D=O(\log n)$ is reasonable in practice, since a person only has a limited attention and cannot pay attention to too many
people in the network.

\section{Conclusion}

In this paper, we investigate the OIM problem under the IC model with node-level feedback.
We presents an $\widetilde{O}(\sqrt{T})$-regret OIM algorithm for the problem, which almost matches the optimal regret bound as well as the state-of-the-art regret bound with edge-level feedback.
Our novel adaptation of MLE to fit the GLB model is of great independent interest, which might be combined with the GLB model to handle rewards generated from a broader classes of distributions.

There still remain several open problems in the node-level feedback setting.
An immediate one is to either remove Assumption \ref{assump: assumption on p*} for the edge probability vector or at least the assumption parameter from the regret bound.
Besides, one can also study if there exists optimal-regret OIM algorithms with node-level feedback that use standard offline oracles.
Finally, it is interesting to develop a general bandit framework which includes OIM with node-level feedback as a special case, just like CMAB-T containing OIM with edge-level feedback.

\section*{Acknowledgements}

This work was supported in part by the National Natural Science Foundation of China Grants No.~61832003, 61872334, the Strategic Priority Research Program of Chinese Academy of Sciences under Grant No.~XDA27000000.

\bibliography{OIM}

\OnlyInShort{\end{document}}

\newpage
\onecolumn
\appendix
\section*{Appendix}

\section{Proof of Theorem \ref{thm: mle}}
\label{section: proof of thm: mle}

OIM with node-level feedback under the IC model is closely related to the \emph{generalized linear bandits} (GLB) problem, introduced by \cite{FilippiCGS10} and further investigated in \cite{LiLZ17}.
To see this, we introduce the GLB model.
Assume there are $T$ rounds in total.
At round $t$, an action $X_t\in\mathbb{R}^d$ with $X_t\neq\mathbf{0}$ is chosen.
Assume there is an unknown parameter $\theta^*\in\mathbb{R}^d$ and a \emph{link function} $\mu:\mathbb{R}\rightarrow\mathbb{R}$.
The reward $Y_t\in\mathbb{R}$ given $X_t$ satisfies that
\[ \E[Y_t\mid X_t]=\mu(X^{\top}_t\theta^*). \]
The goal is to minimize the cumulative regret over $T$ rounds.

Specific to our problem, $X_t\in\{0,1\}^d$ and $Y_t\in\{0,1\}$.
The \emph{link function} $\mu:\mathbb{R}\rightarrow\mathbb{R}$ is defined as $\mu(x)=1-\exp(-x)$.
Besides, $\theta^*\in\Theta\coloneqq\{\theta\in\mathbb{R}^d\mid \theta(e)\geq0,\forall e\in[d],\sum_{e\in [d]} \theta(e)\leq\ln(1/\gamma) \}$, which is induced by Assumption \ref{assump: assumption on p*}.
However, our problem is more difficult in that multiple data pairs $(X_{t,j,v},Y_{t,j,v})$ may be generated for each $v\in V$ at each round $t$, and the reward is specified by a more involved influence function.

The following lemma gives some properties about $\mu$, which are easy to verify and useful for our analysis.
\begin{lemma}
	For any $\theta\in\Theta$ and $X\in\{0,1\}^d$ with $X\neq\mathbf{0}$, it satisfies that
	\[ \dot{\mu}(X^{\top}\theta)\leq 1,\,\dot{\mu}(X^{\top}\theta)\geq\gamma, \mbox{ and }|\ddot{\mu}(X^{\top}\theta)|\leq 1. \]
\end{lemma}

%
%
%
%
%
The analysis in \cite{LiLZ17} cannot be applied here directly for $(X_k,Y_k)$, since it requires that the gradient of the log-likelihood function has the form
$\sum_{k=1}^{t}[Y_{k}-\mu(X^{\top}_{k}\theta)]X_{k}$.
We thus define a pseudo log-likelihood function by ``integrating'' the gradient to apply the analysis.
As in the main text, we gain an estimate $\hat{\theta}_t$ of $\theta^*$ at round $t+1$ by maximizing the following pseudo log-likelihood function:
%
%
\begin{equation}
	\hat{\theta}_t=\argmax L_t(\theta),
	\label{eq: mle_glb}
\end{equation}
where
\[ L_t(\theta)=\sum_{k=1}^{t}[-\exp(-X_k^{\top}\theta)-(1-Y_k)X_k^{\top}\theta]. \]

The following theorem characterizes the confidence intervals of $\theta^*$ induced by $\hat{\theta}_t$, which is the same as Theorem \ref{thm: mle}.
We will prove it instead of Theorem \ref{thm: mle}.

\begin{theorem}
	Assume that $\theta^*\in\Theta$.
	Define $M_t=\sum_{k=1}^{t}X_kX_k^{\top}$ and let $\hat{\theta}_t$ be defined as in Eq.~\eqref{eq: mle_glb}.
	Given $\delta\in(0,1)$, assume that
	\[ \lambda_{\min}(M_t)\geq\frac{512d}{\gamma^4}\left(d^2+\ln\frac{1}{\delta}\right). \]
	Then, with probability at least $1-3\delta$, for any $x\in\mathbb{R}^d$, it satisfies that
	\[ |x^{\top}(\hat{\theta}_t-\theta^*)|\leq\frac{3}{\gamma}\sqrt{\ln(1/\delta)}\cdot\|x\|_{M_t^{-1}}. \]
\end{theorem}

\begin{proof}
	The proof consists of two steps.
	In the first step, it is proved that $\hat{\theta}_t$ falls into the $\eta$-neighborhood $\mathcal{B}_{\eta}$ of $\theta^*$ w.r.t.~$\ell_2$-norm for some $\eta$.
	Since $\dot{\mu}(X^{\top}\theta^*)\geq\gamma$, for any $\theta\in\mathcal{B}_{\eta}$, $\dot{\mu}(X^{\top}\theta)$ also has a lower bound denoted by $\kappa_{\eta}$.
	The values of $\eta$ and $\kappa_{\eta}$ will be determined later.
	In the second step, it is proved that $\hat{\theta}$ and $\theta^*$ are close in any direction $x\in\mathbb{R}^d$.
	
	First note that $\hat{\theta}_t$ satisfies that $\nabla L_t(\hat{\theta}_t)=0$, where the gradient $\nabla L_t(\theta)$ is
	\[ \nabla L_t(\theta)=\sum_{k=1}^{t}[\exp(-X_k^{\top}\theta)-(1-Y_k)]X_k=\sum_{k=1}^{t}[Y_k-\mu(X_k^{\top}\theta)]X_k. \]
	Define $G(\theta)\coloneqq\sum_{k=1}^{t}(\mu(X_k^{\top}\theta)-\mu(X_k^{\top}\theta^*))X_k$.
	Then, we have
	\[ G(\theta^*)=0 \mbox{ and } G(\hat{\theta}_t)=\sum_{k=1}^{t}\epsilon_kX_k, \]
	where $\epsilon_k$ is defined as $\epsilon_k\coloneqq Y_k-\mu(X_k^{\top}\theta^*)$.
	Note that $\E[\epsilon_k\mid X_k]=0$ and $\epsilon_k=Y_k-\mu(X_k^{\top}\theta^*)\in[-1,1]$ since $Y_k\in\{0,1\}$ and $\mu(X_k^{\top}\theta^*)=\Pr[Y_k=1\mid X_k]\in[0,1]$.
	Therefore, $\epsilon_k$ is $1$-sub-Gaussian, i.e.~$\E[\exp(\lambda\epsilon_k)\mid X_k]\leq\exp(\lambda^2/2),\forall\,\lambda\in\mathbb{R}$.
	Further, define $Z\coloneqq G(\hat{\theta}_t)=\sum_{k=1}^{t}\epsilon_kX_k$ for convenience.
	
	\paragraph{Step 1: Consistency of $\hat{\theta}_t$.} We first prove the consistency of $\hat{\theta}_t$.
	For any $\theta_1,\theta_2\in\mathbb{R}^d_+\coloneqq\{\theta\in\mathbb{R}^d\mid \theta(e)\geq0,\forall e\in[d]\}$, by the mean value theorem, there is some $\bar{\theta}=s\theta_1+(1-s)\theta_2$ with $0<s<1$ such that
	\[ G(\theta_1)-G(\theta_2)=\left[\sum_{k=1}^{t}\dot{\mu}(X_k^{\top}\bar{\theta})X_kX_k^{\top}\right](\theta_1-\theta_2)\coloneqq F(\bar{\theta})(\theta_1-\theta_2). \]
	Since $\dot{\mu}(x)=\exp(-x)$, for $\bar{\theta}\in\mathbb{R}^d_+$, $\dot{\mu}(X^{\top}\bar{\theta})>0$.
	Together with $\lambda_{\min}(M_t)>0$, we have $\lambda_{\min}(F(\bar{\theta}))>0$.
	Therefore, for any $\theta_1\neq\theta_2$,
	\[ (\theta_1-\theta_2)^{\top}(G(\theta_1)-G(\theta_2))=(\theta_1-\theta_2)^{\top}F(\bar{\theta})(\theta_1-\theta_2)>0. \]
	Consequently, $G(\theta)$ is an injection from $\mathbb{R}^d$ to $\mathbb{R}^d$ and therefore $G^{-1}$ is well-defined.
	We thus have $\hat{\theta}=G^{-1}(Z)$.
	
	Let $\mathcal{B}_{\eta}\coloneqq\{\theta\mid \|\theta-\theta^*\|\leq\eta\}$ be the $\eta$-neighborhood of $\theta^*$ and $\partial\mathcal{B}_{\eta}\coloneqq\{\theta\mid \|\theta-\theta^*\|=\eta\}$.
	Define $\kappa_{\eta}\coloneqq\inf_{\theta\in\mathcal{B}_{\eta},X\neq\mathbf{0}}\dot{\mu}(X^{\top}\theta)>0$.
	The following lemma shows that if $G(\theta)$ and $G(\theta^*)$ are close, then $\theta$ and $\theta^*$ are also close.
	Its proof is presented in Appendix \ref{section: proof of lemma: G(theta)'s bound 1}.
	
	\begin{lemma}
		\label{lemma: G(theta)'s bound 1}
		$\{\theta\mid \|G(\theta)\|_{M_t^{-1}}\leq \kappa_{\eta}\eta\sqrt{\lambda_{\min}(M_t)} \}\subseteq \mathcal{B}_{\eta}$.
	\end{lemma}

	Next, in the following lemma, we give an upper bound of $\|Z\|_{M_t^{-1}}=\|G(\hat{\theta}_t)\|_{M_t^{-1}}$, which shows that $G(\hat{\theta}_t)$ and $G(\theta^*)$ are indeed close.
	Its proof is presented in Appendix \ref{section: proof of lemma: G(theta)'s bound 2}.
	\begin{lemma}
		\label{lemma: G(theta)'s bound 2}
		For any $\delta>0$, define the following event:
		\[ \mathcal{E}_G\coloneqq\{\|Z\|_{M_t^{-1}}\leq 4\sqrt{d+\ln (1/\delta)}\}. \]
		Then, $\mathcal{E}_G$ holds with probability at least $1-\delta$.
	\end{lemma}
	
    By the above lemmas, when $\mathcal{E}_G$ holds, for any $\eta$, $\eta\geq\frac{4}{\kappa_{\eta}}\sqrt{\frac{d+\ln(1/\delta)}{\lambda_{\min}(M_t)}}$ implies that $\|\hat{\theta}_t-\theta^*\|\leq\eta$.
    
    It remains to determine appropriate $\eta$ and $\kappa_{\eta}$ for the second step.
    Note that since $\|\hat{\theta}_t-\theta^*\|_1\leq\sqrt{d}\cdot\|\hat{\theta}_t-\theta^*\|\leq\sqrt{d}\eta$, we have $\sum_{e\in[d]}\hat{\theta}(e)\leq\ln(1/\gamma)+\sqrt{d}\eta$.
    Therefore, we choose $\eta=\ln(1+\epsilon)/\sqrt{d}$ for some $\epsilon$ to be determined later.
    In this case, $\kappa_{\eta}=\gamma/(1+\epsilon)\coloneqq\overline{\gamma}$.
    
    To summarize, when $\lambda_{\min}(M_t)\geq \frac{16(d+\ln(1/\delta))}{\kappa_{\eta}^2\eta^2}=\frac{16d(d+\ln(1/\delta))}{\overline{\gamma}^2\ln^2(1+\epsilon)}$, $\|\hat{\theta}_t-\theta^*\|\leq\ln(1+\epsilon)/\sqrt{d}$ with probability at least $1-\delta$.
	
	\paragraph{Step 2: Normality of $\hat{\theta}$.} In the following, we assume that $\hat{\theta}_t$ falls in the $\eta$-neighborhood $\mathcal{B}_\eta$ of $\theta^*$, where $\eta=\ln(1+\epsilon)/\sqrt{d}$, $\kappa_{\eta}=\overline{\gamma}=\gamma/(1+\epsilon)$ and $\epsilon$ is set to be the largest value such that $32^2(1+\epsilon)^6\leq 512(3-\sqrt{2}(1+\epsilon))^2$.
	Define $\Delta\coloneqq\hat{\theta}_t-\theta^*$.
	The previous argument shows that there exists a $s\in[0,1]$ such that
	\[ Z=G(\hat{\theta}_t)-G(\theta^*)=(H+E)\Delta, \]
	where $\bar{\theta}=s\theta^*+(1-s)\hat{\theta}_t\in\mathcal{B}_{\eta}$, $H\coloneqq F(\theta^*)=\sum_{k=1}^{t}\dot{\mu}(X_k^{\top}\theta^*)X_kX_k^{\top}$ and $E\coloneqq F(\bar{\theta})-F(\theta^*)$.
	For any $x\in\mathbb{R}^d$,
	\[ x^{\top}(\hat{\theta}_t-\theta^*)=x^{\top}(H+E)^{-1}Z=x^{\top}H^{-1}Z-x^{\top}H^{-1}E(H+E)^{-1}Z. \]
	Note that $(H+E)^{-1}$ exists since $H+E=F(\bar{\theta})\succ\overline{\gamma} M_t\succ0$.
	We now bound the two terms, respectively.
	
	For the first term, define
	\[ D\coloneqq(X_1,X_2,\cdots,X_t)^{\top}\in\mathbb{R}^{n\times d}. \]
	Note that $D^{\top}D=\sum_{k=1}^{t}X_kX_k^{\top}=M_t$.
	Since $\epsilon_k$ is $1$-sub-Gaussian, by the Hoeffding inequality,
	\[ \Pr[|x^{\top}H^{-1}Z|\geq a]=\Pr\left[\left|\sum_{k=1}^{t}x^{\top}H^{-1}X_k\epsilon_k\right|\geq a\right]\leq \exp\left(-\frac{a^2}{2\|x^{\top}H^{-1}D^{\top}\|^2}\right). \]
	Since $H\succeq \overline{\gamma} M_t$, we have
	\[ \|x^{\top}H^{-1}D^{\top}\|^2=x^{\top}H^{-1}D^{\top}DH^{-1}x\leq\frac{1}{\overline{\gamma}^2}\|x\|_{M_t^{-1}}^2. \]
	Thus we have
	\[ \Pr[|x^{\top}H^{-1}Z|\geq a]\leq \exp\left(-\frac{a^2\overline{\gamma}^2}{2\|x\|_{M_t^{-1}}^2}\right). \]
	By choosing an appropriate $a$, we obtain that with probability at least $1-2\delta$,
	\[ |x^{\top}H^{-1}Z|\leq \frac{\sqrt{2\ln(1/\delta)}}{\overline{\gamma}}\|x\|_{M_t^{-1}}. \]
	
	For the second term,
	\begin{align*}
		|x^{\top}H^{-1}E(H+E)^{-1}Z| &\leq \|x\|_{H^{-1}}\|H^{-1/2}E(H+E)^{-1}Z\| \\
		&\leq \|x\|_{H^{-1}}\|H^{-1/2}E(H+E)^{-1}H^{1/2}\|\|Z\|_{H^{-1}} \\
		&\leq \frac{1}{\overline{\gamma}}\|x\|_{M_t^{-1}}\|H^{-1/2}E(H+E)^{-1}H^{1/2}\|\|Z\|_{M_t^{-1}}.
	\end{align*}
    The first inequality is due to Cauchy-Schwarz inequality.
    The second inequality holds since $\|AB\|\leq\|A\|\|B\|$.
	The last inequality holds since $H\succeq \overline{\gamma} M_t$.
	Next, since $(H+E)^{-1}=H^{-1}-H^{-1}E(H+E)^{-1}$, we have
	\begin{align*}
		\|H^{-1/2}E(H+E)^{-1}H^{1/2}\| &= \|H^{-1/2}E(H^{-1}-H^{-1}E(H+E)^{-1})H^{1/2}\| \\
		&= \|H^{-1/2}EH^{-1/2}-H^{-1/2}EH^{-1}E(H+E)^{-1}H^{1/2}\| \\
		&\leq \|H^{-1/2}EH^{-1/2}\|+\|H^{-1/2}EH^{-1/2}\|\|H^{-1/2}E(H+E)^{-1}H^{1/2}\|
	\end{align*}
    To complete our proof, we need the following technical lemma, whose proof is presented in Appendix \ref{section: proof of lemma: a technical bound}.
    \begin{lemma}
    	\label{lemma: a technical bound}
    	\[ \|H^{-1/2}EH^{-1/2}\|\leq \frac{4}{\overline{\gamma}^2}\sqrt{\frac{d(d+\ln1/\delta)}{\lambda_{\min}(M_t)}}. \]
    	Specifically, when $\lambda_{\min}(M_t)\geq64d(d+\ln(1/\delta))/\overline{\gamma}^4$,
    	\[ \|H^{-1/2}EH^{-1/2}\|\leq 1/2. \]
    \end{lemma}
	Therefore, we have
	\[ \|H^{-1/2}E(H+E)^{-1}H^{1/2}\|\leq\frac{\|H^{-1/2}EH^{-1/2}\|}{1-\|H^{-1/2}EH^{-1/2}\|}\leq 2\|H^{-1/2}EH^{-1/2}\|\leq \frac{8}{\overline{\gamma}^2}\sqrt{\frac{d(d+\ln1/\delta)}{\lambda_{\min}(M_t)}}. \]
	Therefore, together with Lemma \ref{lemma: G(theta)'s bound 2}, we have
	\[ |x^{\top}H^{-1}E(H+E)^{-1}Z|\leq \frac{32\sqrt{d}(d+\ln1/\delta)}{\overline{\gamma}^3\sqrt{\lambda_{\min}(M_t)}}\|x\|_{M_t^{-1}}. \]
	Combining the above inequalities, we have
	\[ |x^{\top}(\hat{\theta}_t-\theta^*)|\leq\left(\frac{\sqrt{2\ln(1/\delta)}}{\overline{\gamma}}+\frac{32\sqrt{d}(d+\ln1/\delta)}{\overline{\gamma}^3\sqrt{\lambda_{\min}(M_t)}}\right)\|x\|_{M_t^{-1}}\leq \frac{3\sqrt{\ln(1/\delta)}}{(1+\epsilon)\overline{\gamma}}\|x\|_{M_t^{-1}}=\frac{3\sqrt{\ln(1/\delta)}}{\gamma}\|x\|_{M_t^{-1}}. \]
	By the choice of $\epsilon$, the last inequality holds when
	\[ \lambda_{\min}(M_t)\geq\frac{512d(d+\ln(1/\delta))^2}{\gamma^4\ln(1/\delta)}. \]
	The proof is completed.
\end{proof}

\subsection{Proof of Lemma \ref{lemma: G(theta)'s bound 1}}
\label{section: proof of lemma: G(theta)'s bound 1}

This lemma is a direct application of Lemma A of \cite{ChenHY99}.
For completeness, we restate it in the lemma below.
\begin{lemma}[\cite{ChenHY99}]
	Let $H$ be a smooth injection from $\mathbb{R}^d$ to $\mathbb{R}^d$ with $H(x_0)=y_0$.
	Define $\mathcal{B}_{\delta}(x_0)\coloneqq\{x\in\mathbb{R}^d\mid \|x-x_0\|\leq\delta\}$ and $\partial\mathcal{B}_{\delta}(x_0)\coloneqq\{x\in\mathbb{R}^d\mid \|x-x_0\|=\delta\}$.
	Then $\inf_{x\in \partial\mathcal{B}_{\delta}(x_0)}\|H(x)-y_0\|\geq r$ implies
	\begin{enumerate}
		\item $\mathcal{B}_r(y_0)\coloneqq\{y\in\mathbb{R}^d\mid \|y-y_0\|\leq r\}\subseteq H(\mathcal{B}_{\delta}(x_0))$.
		\item $H^{-1}(\mathcal{B}_r(y_0))\subseteq \mathcal{B}_{\delta}(x_0)$.
	\end{enumerate}
\end{lemma}
For any $\theta\in \partial\mathcal{B}_{\eta}$, there is some $\bar{\theta}=s\theta+(1-s)\theta^*\in\mathcal{B}_{\eta}$ with $0<s<1$ such that $G(\theta)-G(\theta^*)=F(\bar{\theta})(\theta-\theta^*)$, where $F(\bar{\theta})=\sum_{k=1}^{t}\dot{\mu}(X_k^{\top}\bar{\theta})X_kX_k^{\top}\succeq \kappa_{\eta} M_t$, and
\[ \|G(\theta)\|_{M_t^{-1}}^2=\|G(\theta)-G(\theta^*)\|_{M_t^{-1}}^2=(\theta-\theta^*)^{\top}F(\bar{\theta})M_t^{-1}F(\bar{\theta})(\theta-\theta^*)\geq\kappa_{\eta}^2\lambda_{\min}(M_t)\|\theta-\theta^*\|^2=\kappa_{\eta}^2\eta^2\lambda_{\min}(M_t). \]
By Lemma A of \cite{ChenHY99}, the proof is completed.

\subsection{Proof of Lemma \ref{lemma: G(theta)'s bound 2}}
\label{section: proof of lemma: G(theta)'s bound 2}

Let $\langle\cdot,\cdot\rangle$ denote the inner product. Note that
\[ \|Z\|_{M_t^{-1}}=\|M_t^{-1/2}Z\|=\sup_{\|y\|\leq 1}\langle y, M_t^{-1/2}Z\rangle. \]
Let $\hat{\mathbb{B}}$ be a $1/2$-net of the unit ball $\mathbb{B}^d=\{y\in\mathbb{R}^d\mid \|y\|\leq 1\}$.
Then $|\hat{\mathbb{B}}|\leq 6^d$ \cite{Pollard90}, and for any $x\in\mathbb{B}^d$, there is a $\hat{x}\in\hat{\mathbb{B}}$ such that $\|\hat{x}-x\|\leq1/2$.
Thus,
\begin{align*}
	\langle x, M_t^{-1/2}Z\rangle &=\langle \hat{x}, M_t^{-1/2}Z\rangle+\langle x-\hat{x}, M_t^{-1/2}Z\rangle \\
	&=\langle \hat{x}, M_t^{-1/2}Z\rangle+\|x-\hat{x}\|\langle \frac{x-\hat{x}}{\|x-\hat{x}\|}, M_t^{-1/2}Z\rangle \\
	&\leq \langle \hat{x}, M_t^{-1/2}Z\rangle+\frac{1}{2}\sup_{\|y\|\leq 1} \langle y, M_t^{-1/2}Z\rangle.
\end{align*}
By taking supremum on both sides, we obtain that
\[ \sup_{\|y\|\leq 1} \langle y, M_t^{-1/2}Z\rangle\leq 2\max_{\hat{x}\in\hat{\mathbb{B}}} \langle \hat{x}, M_t^{-1/2}Z\rangle. \]
Finally, define $D\coloneqq (X_1,X_2,\cdots,X_t)^{\top}\in\mathbb{R}^{t\times d}$.
Then, $D^{\top}D=M_t$.
We have
\begin{align*}
	\Pr[\|Z\|_{M_t^{-1}}>a] &\leq \Pr[\max_{\hat{x}\in\hat{\mathbb{B}}} \langle \hat{x}, M_t^{-1/2}Z\rangle> a/2] \\
	&\leq \sum_{\hat{x}\in\hat{\mathbb{B}}}\Pr[\langle \hat{x}, M_t^{-1/2}Z\rangle> a/2] \\
	&\leq \sum_{\hat{x}\in\hat{\mathbb{B}}}\exp\left(-\frac{a^2}{8\|\hat{x}^{\top}M_t^{-1/2}D^{\top}\|^2}\right) \\
	&\leq \exp\left(-a^2/8+d\ln 6\right)\leq\delta.
\end{align*}
The second to last inequality holds due to Hoeffding inequality.
The last inequality holds by choosing $a=4\sqrt{d+\ln(1/\delta)}$.

\subsection{Proof of Lemma \ref{lemma: a technical bound}}
\label{section: proof of lemma: a technical bound}

By the mean value theorem,
\[ E=\sum_{k=1}^{t}(\dot{\mu}(X_k^{\top}\bar{\theta})-\dot{\mu}(X_k^{\top}\theta^*))X_kX_k^{\top}=\sum_{k=1}^{t}\ddot{\mu}(r_k)X_k^{\top}\Delta X_kX_k^{\top}. \]
for some $r_k\in\mathbb{R}$. Since $|\ddot{\mu}|\leq 1$, for any $x\in\mathbb{R}^d\setminus\{0\}$, we have
\begin{align*}
	x^{\top}H^{-1/2}EH^{-1/2}x &=\sum_{k=1}^{t}\ddot{\mu}(r_k)X^{\top}_k\Delta\|x^{\top}H^{-1/2}X_k\|^2 \\
	&\leq \sqrt{d} \|\Delta\|\left(x^{\top}H^{-1/2}\left(\sum_{k=1}^{t}X_kX_k^{\top}\right)H^{-1/2}x\right) \\
	&\leq \frac{\sqrt{d}}{\overline{\gamma}}\|\Delta\|\|x\|^2.
\end{align*}
The first inequality is due to Cauchy-Schwarz inequality.
The second inequality holds since $\|X_k\|\leq\sqrt{d}$.
The last inequality holds since $H\succ\overline{\gamma} M_t$.
Therefore, by the definition of spectral norm of a matrix,
\[ \|H^{-1/2}EH^{-1/2}\|\leq \frac{\sqrt{d}}{\overline{\gamma}}\|\Delta\|\leq \frac{4}{\overline{\gamma}^2}\sqrt{\frac{d(d+\ln1/\delta)}{\lambda_{\min}(M_t)}}. \]
When $\lambda_{\min}(M_t)\geq64d(d+\ln(1/\delta))/\overline{\gamma}^4$, we have
\[ \|H^{-1/2}EH^{-1/2}\|\leq 1/2. \]

\section{Proof of Lemma \ref{lemma: TPM}}
\label{section: proof of lemma: TPM}

In this section, we present the proof of Lemma \ref{lemma: TPM}, the GOM bounded smoothness for the IC model.
The proof is similar to that for the LT model in \citep{LiKTLC20}.

Fix any seed set $S\subseteq V$ throughout the proof.
For edge probability vector $p^*$, let $\sigma(S,p^*,v)$ be the probability that $v$ is activated under $p^*$.
Let $G'$ be a random sub-graph of the original graph $G$ such that each edge $e\in E$ of $G$ appears in $G'$ independently with probability $p^*(e)$.
By the diffusion rule of the IC model, $\sigma(S,p^*,v)$ also means the probability that there is a path from $S$ to $v$ in $G'$.
For two distinct edge probability vectors $p^*$ and $\tilde{p}$, we couple them in the following way.
Define a random vector $r\sim U[0,1]^m$ which satisfies that $r(e)\sim U[0,1]$ for each edge $e\in E$, where $U[0,1]$ denotes the uniform distribution over interval $[0,1]$.
Given edge probability vector $p^*$, for $e\in E$, $e$ appears in $G'$ if and only if $r(e)\leq p^*(e)$.
It is easy to see that the probability that $e$ appears in $G'$ is $p^*(e)$.
For $\tilde{p}$, the random experiment can be executed using the same $r$.
In this way, $p^*$ and $\tilde{p}$ are coupled.

For node $v\in V$, define event
\[ \mathcal{E}_{0,v}\coloneqq \mathbf{1}\{v \,\mbox{is activated under}\,\tilde{p}\}\neq \mathbf{1}\{v \,\mbox{is activated under}\,p^*\}. \]
According to our notations,
\[ |\sigma(S,\tilde{p})-\sigma(S,p^*)| =\left|\sum_{v\in V\setminus S} \sigma(S,\tilde{p},v)-\sigma(S,p^*,v)\right|\leq \sum_{v\in V\setminus S}|\sigma(S,\tilde{p},v)-\sigma(S,p^*,v)|=\sum_{v\in V\setminus S}\Pr_{r\sim U[0,1]^m}[\mathcal{E}_{0,v}]. \]
Recall that $V[S,v]$ is defined to be the set of nodes relevant to $v$ given seed set $S$, namely for each $u\in V[S,v]$, there is a path $P$ from $S$ to $v$ such that $u\in P$.
Thus, the occurrence of $\mathcal{E}_{0,v}$ means there is a node $u\in V[S,v]$ such that its activation state is different under $\tilde{p}$ and $p^*$.
More strictly, let $\Phi(p^*,r)\coloneqq (S_0=S,S_1,\ldots,S_{n-1})$ be the sequence of active nodes under $p^*$ and $r$.
Let $\Phi_{\tau}(p^*,r)=S_{\tau}$ be the set of active nodes immediately after time $\tau$.
For node $u\in V[S,v]$, define $\mathcal{E}_1(u)$ to describe that $u$ is the first node that possesses distinct activation states under $\tilde{p}$ and $p^*$, namely
\begin{align*}
	\mathcal{E}_1(u)\coloneqq\{r\mid \exists\,\tau,\forall\,\tau'<\tau, \Phi_{\tau'}(p^*,r)=\Phi_{\tau'}(\tilde{p},r), u\in (\Phi_{\tau}(p^*,r)\setminus \Phi_{\tau}(\tilde{p},r))\cup(\Phi_{\tau}(\tilde{p},r)\setminus \Phi_{\tau}(p^*,r))\}.
\end{align*}
Then, by the above argument,
\[ \mathcal{E}_{0,v}\subseteq\bigcup_{u\in V[S,v]} \mathcal{E}_1(u). \]
Therefore, by union bound,
\begin{equation}
	\label{eq: aux3}
	|\sigma(S,\tilde{p})-\sigma(S,p^*)|\leq \sum_{v\in V\setminus S}\sum_{u\in V[S,v]} \Pr_{r\sim U[0,1]^m}[\mathcal{E}_1(u)].
\end{equation}

To further determine the probability that $\mathcal{E}_1(u)$ occurs, for each $0\leq \tau\leq n-1$, we define the following event:
\begin{align*}
	\mathcal{E}_{2,0}(u,\tau) &\coloneqq\{r\mid \forall\,\tau'<\tau, \Phi_{\tau'}(p^*,r)=\Phi_{\tau'}(\tilde{p},r),u\notin \Phi_{\tau-1}(p^*,r)\}. \\
	\mathcal{E}_{2,1}(u,\tau) &\coloneqq\{r\mid \forall\,\tau'<\tau, \Phi_{\tau'}(p^*,r)=\Phi_{\tau'}(\tilde{p},r), u\in (\Phi_{\tau}(p^*,r)\setminus \Phi_{\tau}(\tilde{p},r))\cup(\Phi_{\tau}(\tilde{p},r)\setminus \Phi_{\tau}(p^*,r))\}. \\
	\mathcal{E}_{3,1}(u,\tau) &\coloneqq\{r\mid u\in (\Phi_{\tau}(p^*,r)\setminus \Phi_{\tau}(\tilde{p},r))\cup(\Phi_{\tau}(\tilde{p},r)\setminus \Phi_{\tau}(p^*,r))\}.
\end{align*}
Note that event $\mathcal{E}_{2,1}(u,\tau)$ means that event $\mathcal{E}_1(u)$ occurs in time $\tau$.
Hence,
\[ \Pr_{r\sim U[0,1]^m}[\mathcal{E}_1(u)]=\sum_{\tau=0}^{n-1}\Pr_{r\sim U[0,1]^m}[\mathcal{E}_{2,1}(u,\tau)]. \]

Next, we estimate the value of $\Pr_{r\sim U[0,1]^m}[\mathcal{E}_{2,1}(u,\tau)]$.
For $u\in V$, let $r_u\in[0,1]^{d_u}$ denote the sub-vector which consists of $r$'s entries over the incoming edges of $u$.
Let $r_{-u}\in[0,1]^{m-d_u}$ denote the vector consisting of the remaining entries.
By fixing $r_{-u}$, we define sub-event $\mathcal{E}_{2,1}(u,\tau,r_{-u})\subseteq \mathcal{E}_{2,1}(u,\tau)$ to be the restriction of event $\mathcal{E}_{2,1}(u,\tau)$ when $r_{-u}$ is fixed.
Similarly, we can define $\mathcal{E}_{2,0}(u,\tau,r_{-u})\subseteq \mathcal{E}_{2,0}(u,\tau)$ and $\mathcal{E}_{3,1}(u,\tau,r_{-u})\subseteq \mathcal{E}_{3,1}(u,\tau)$.
Note that
\begin{align*}
	\mathcal{E}_{2,1}(u,\tau) &=\mathcal{E}_{2,0}(u,\tau)\cap \mathcal{E}_{3,1}(u,\tau). \\
	\mathcal{E}_{2,1}(u,\tau,r_{-u}) &=\mathcal{E}_{2,0}(u,\tau)\cap \mathcal{E}_{3,1}(u,\tau,r_{-u}).
\end{align*}
Therefore,
\[ \Pr_{r\sim U[0,1]^m}[\mathcal{E}_{2,1}(u,\tau)] =\Pr_{r\sim U[0,1]^m}[\mathcal{E}_{2,0}(u,\tau)]\cdot\Pr_{r\sim U[0,1]^m}[\mathcal{E}_{3,1}(u,\tau)\mid \mathcal{E}_{2,0}(u,\tau)] \leq \Pr_{r\sim U[0,1]^m}[\mathcal{E}_{3,1}(u,\tau)\mid \mathcal{E}_{2,0}(u,\tau)], \]
and
\begin{equation}
	\label{eq: aux1}
	\begin{split}
		\Pr_{r_u\sim U[0,1]^{d_u}}[\mathcal{E}_{2,1}(u,\tau,r_{-u})] &=\Pr_{r_u\sim U[0,1]^{d_u}}[\mathcal{E}_{2,0}(u,\tau,r_{-u})]\cdot\Pr_{r_u\sim U[0,1]^{d_u}}[\mathcal{E}_{3,1}(u,\tau,r_{-u})\mid \mathcal{E}_{2,0}(u,\tau,r_{-u})] \\
		&\leq \Pr_{r_u\sim U[0,1]^{d_u}}[\mathcal{E}_{3,1}(u,\tau,r_{-u})\mid \mathcal{E}_{2,0}(u,\tau,r_{-u})].
	\end{split}
\end{equation}
By the definition of event $\mathcal{E}_{2,0}(u,\tau,r_{-u})$, by the end of time $\tau-1$, the sets of active nodes are the same under $p^*$ and $\tilde{p}$, and $u$ is inactive at the time.
Besides, since $r_{-u}$ is fixed, the set of active nodes by the end of time $\tau-1$ is also fixed.
We use $\Phi_{\tau'}(\mathcal{E}_{2,0}(u,\tau,r_{-u}))$ to denote the set of active nodes by the end of time $\tau'$ under event $\mathcal{E}_{2,0}(u,\tau,r_{-u})$.
Consider the probability that event $\mathcal{E}_{3,1}(u,\tau,r_{-u})$ occurs conditioned on event $\mathcal{E}_{2,0}(u,\tau,r_{-u})$.
Define $Q(u,\tau,r_{-u})$ to be the neighbors of $u$ which were just activated in time $\tau-1$, namely
\[ Q(u,\tau,r_{-u})\coloneqq (\Phi_{\tau-1}(\mathcal{E}_{2,0}(u,\tau,r_{-u}))\setminus \Phi_{\tau-2}(\mathcal{E}_{2,0}(u,\tau,r_{-u})))\cap N(u). \]
Let $Z=Z(u,\tau,r_{-u})\in\{0,1\}^{d_u}$ be the characteristic vector of $Q(u,\tau,r_{-u})$.
By the diffusion rule of the IC model,
\[ \Pr_{r_u\sim U[0,1]^{d_u}}[u\in \Phi_{\tau}(p^*,r) \mid \mathcal{E}_{2,0}(u,\tau,r_{-u})]=1-\prod_{u'\in Q} (1-p^*(e_{u'u}))=\mu(Z^{\top}\theta^*_u), \]
where the definition of $\theta^*_u$ can be found in eq.~\eqref{eq: para trans} and $\mu:\mathbb{R}\rightarrow\mathbb{R}$ is the link function which has the form $\mu(x)=1-\exp(-x)$.
Therefore,
\[ \Pr_{r_u\sim U[0,1]^{d_u}}[\mathcal{E}_{3,1}(u,\tau,r_{-u}) \mid \mathcal{E}_{2,0}(u,\tau,r_{-u})]\leq |\mu(Z^{\top}\theta^*_u)-\mu(Z^{\top}\tilde{\theta}_u)|\leq |Z^{\top}(\theta^*_u-\tilde{\theta}_u)|. \]
The last inequality holds since $\mu$ is $1$-Lipschitz.
Combining with eq.~\eqref{eq: aux1}, we obtain that
\begin{equation}
	\label{eq: aux2}
	\Pr_{r_u\sim U[0,1]^{d_u}}[\mathcal{E}_{2,1}(u,\tau,r_{-u})]\leq |Z(u,\tau,r_{-u})^{\top}(\theta^*_u-\tilde{\theta}_u)|.
\end{equation}

Event $\mathcal{E}_{2,0}(u,\tau,r_{-u})$ depends on both $p^*$ and $\tilde{p}$, which is hard to analyze.
For this reason, we define event
\[ \mathcal{E}_{4,0}(u,\tau,r_{-u})\coloneqq \{\theta=(r_{-u},r_u)\mid u\notin\Phi_{\tau-1}(p^*,r)\}. \]
Clearly, $\mathcal{E}_{2,0}(u,\tau,r_{-u})\subseteq \mathcal{E}_{4,0}(u,\tau,r_{-u})$.
Besides, when $\mathcal{E}_{2,0}(u,\tau,r_{-u})\neq \emptyset$, since $r_{-u}$ is fixed in events $\mathcal{E}_{2,0}(u,\tau,r_{-u})$ and $\mathcal{E}_{4,0}(u,\tau,r_{-u})$ and $u$ is inactive by the end of time $\tau-1$, the active nodes by the end of time $\tau-1$ are the same under the two events.
Define $P(u,\tau,r_{-u})$ be the set of $u$'s neighbors which are just activated in time $\tau-1$ under event $\mathcal{E}_{4,0}(u,\tau,r_{-u})$, namely
\[ P(u,\tau,r_{-u})\coloneqq (\Phi_{\tau-1}(\mathcal{E}_{4,0}(u,\tau,r_{-u}))\setminus \Phi_{\tau-2}(\mathcal{E}_{4,0}(u,\tau,r_{-u})))\cap N(u). \]
Then, $Q(u,\tau,r_{-u})=P(u,\tau,r_{-u})$.
Let $X(u,\tau,r_{-u})\in\{0,1\}^{d_u}$ be the characteristic vector of $P(u,\tau,r_{-u})$.
By the above argument and eq.~\eqref{eq: aux2},
\[ \Pr_{r_u\sim U[0,1]^{d_u}}[\mathcal{E}_{2,1}(u,\tau,r_{-u})]\leq |X(u,\tau,r_{-u})^{\top}(\theta^*_u-\tilde{\theta}_u)|. \]
When $\mathcal{E}_{2,0}(u,\tau,r_{-u})= \emptyset$, the left-hand side of the above inequality is $0$.
Thus, the inequality still holds.
To sum up,
\begin{align*}
	\Pr_{r\in U[0,1]^m}[\mathcal{E}_1(u)]
			&=\int_{r_{-u}\sim [0,1]^{m-d_u}} \sum_{\tau=0}^{n-1}\Pr_{r_u\sim U[0,1]^{d_u}}[\mathcal{E}_{2,1}(u,\tau,r_{-u})]\,\mathrm{d}r_{-u} \\
			&\leq \int_{r_{-u}\in [0,1]^{m-d_u}} \sum_{\tau=0}^{n-1}|X(u,\tau,r_{-u})^{\top}(\theta^*_u-\tilde{\theta}_u)|\,\mathrm{d}r_{-u} \\
			&=\E_{r_{-u}\in U[0,1]^{m-d_u}}\left[\sum_{\tau=0}^{n-1}|X(u,\tau,r_{-u})^{\top}(\theta^*_u-\tilde{\theta}_u)|\right].
\end{align*}
By plugging the above inequality into eq.~\eqref{eq: aux3}, we obtain that
\begin{align*}
	|\sigma(S,\tilde{p})-\sigma(S,p^*)|
			&\leq \sum_{v\in V\setminus S}\sum_{u\in V[S,v]} \E_{r_{-u}\in U[0,1]^{m-d_u}}\left[\sum_{\tau=0}^{n-1}|X(u,\tau,r_{-u})^{\top}(\theta^*_u-\tilde{\theta}_u)|\right] \\
			&\leq \sum_{v\in V\setminus S}\sum_{u\in V[S,v]}\E\left[\sum_{j=1}^{J_{u}}\left|X_{j,u}^{\top}(\tilde{\theta}_u-\theta^*_u)\right|\right].
\end{align*}
The last inequality holds since by definition, $X_{j,u}\in\{0,1\}^{d_u}$ indicates the characteristic vector of the edges corresponding to the $j$-th batch of neighbors that were activated.

\section{Proof of Lemma \ref{lemma: upper bound of modified M norm}}
\label{section: proof of lemma: upper bound of modified M norm}

Fix $v\in V$.
For simplicity, let $z_{t,j}=\|X_{t,j,v}\|_{M_{t-1,v}^{-1}}$
Recall the definition of $M_{t,v}$ (see eq.~\eqref{eq: data matrix}), we have
\[ M_{t,v} = M_{t-1,v}+\sum_{j=1}^{J_{t,v}}X_{t,j,v}X_{t,j,v}^{\top}. \]
Basic algebra gives us that
\begin{align*}
	\det[M_{t,v}]
	&\geq\det\left[M_{t-1,v}+X_{t,j,v}X_{t,j,v}^{\top}\right] \\
	&=\det\left[M_{t-1,v}^{1/2}\left(I+M_{t-1,v}^{-1/2}X_{t,j,v}X_{t,j,v}^{\top}M_{t-1,v}^{-1/2}\right)M_{t-1,v}^{1/2}\right] \\
	&=\det[M_{t-1,v}]\det\left[I+M_{t-1,v}^{-1/2}X_{t,j,v}X_{t,j,v}^{\top}M_{t-1,v}^{-1/2}\right] \\
	&=\det[M_{t-1,v}]\left(1+X_{t,j,v}^{\top}M_{t-1,v}^{-1}X_{t,j,v}\right) \\
	&=\det[M_{t-1,v}](1+z_{t,j}^2).
\end{align*}
Thus, it can be deduced that
\[ \det[M_{t,v}]^{J_{t,v}} \geq \det[M_{t-1,v}]^{J_{t,v}}\prod_{j=1}^{J_{t,v}}(1+z_{t,j}^2). \]
Next, by $\det[M_{t,v}]\geq \det[M_{t-1,v}]$ and $J_{t,v}\leq d_v$, we have
\[ \det[M_{t,v}]^{d_v} \geq \det[M_{t-1,v}]^{d_v}\prod_{j=1}^{J_{t,v}}(1+z_{t,j}^2). \]
Therefore, we have
\begin{align*}
	\det[M_{T,v}]^{d_v}\geq\det[M_{T_0,v}]^{d_v}\prod_{t=T_0+1}^{T}\prod_{j=1}^{J_{t,v}}(1+z_{t,j}^2)=R^{d_v^2}\prod_{t=T_0+1}^{T}\prod_{j=1}^{J_{t,v}}(1+z_{t,j}^2).
\end{align*}
On the other hand,
\begin{align*}
	\tr[M_{T,v}] =\tr\left[M_{T_0,v}+\sum_{t=T_0+1}^{T}\sum_{j=1}^{J_{t,v}}X_{t,j,v}X_{t,j,v}^{\top}\right]=\tr[M_{T_0,v}]+\sum_{t=T_0+1}^{T}\sum_{j=1}^{J_{t,v}}\|X_{t,j,v}\|^2\leq Rd_v+(T-T_0)d_v^2.
\end{align*}
By the trace-determinant inequality, we have
\begin{align*}
	(R+(T-T_0)d_v)^{d_v} \geq\left(\frac{1}{d_v}\tr[M_{T,v}]\right)^{d_v} \geq\det[M_{T,v}] \geq R^{d_v^2}\prod_{t=T_0+1}^{T}\prod_{j=1}^{J_{t,v}}(1+z_{t,j}^2).
\end{align*}
By taking logarithm on both sides, we obtain that
\begin{align*}
	d_v\ln(R+(T-T_0)d_v)\geq \sum_{t=T_0+1}^T\sum_{j=1}^{J_{t,v}}\ln(1+z_{t,j}^2)+d_v^2\ln R\geq \sum_{t=T_0+1}^T\sum_{j=1}^{J_{t,v}}\frac{z_{t,j}^2}{d_v+1}+d_v\ln R^{d_v}.
\end{align*}
The last inequality holds since $\ln(1+x)\geq \frac{x}{d_v+1}$ for $x\in[0,d_v]$, and
\[ z_{t,j}^2=\|X_{t,j,v}\|_{M_{t-1,v}^{-1}}^2\leq\|X_{t,j,v}\|^2\leq d_v. \]
By rearranging the above inequality, we have
\begin{align*}
	\sum_{t=T_0+1}^T\sum_{j=1}^{J_{t,v}}\|X_{t,j,v}\|_{M_{t-1,v}^{-1}}^2 =\sum_{t=T_0+1}^T\sum_{j=1}^{J_{t,v}}z_{t,j}^2 \leq d_v(d_v+1)\cdot\ln\left(\frac{R+(T-T_0)d_v}{R^{d_v}}\right).
\end{align*}
Finally, by the Cauchy-Schwarz inequality,
\begin{align*}
	\sum_{t=T_0+1}^{T}\sum_{v\in V\setminus S_t}n_{S_t,v}\sum_{j=1}^{J_{t,v}}\|X_{t,j,v}\|_{M_{t-1,v}^{-1}}
	&\leq \sqrt{\left(\sum_{t=T_0+1}^{T}\sum_{v\in V\setminus S_t}\sum_{j=1}^{J_{t,v}}n_{S_t,v}^2\right)\left( \sum_{t=T_0+1}^{T}\sum_{v\in V\setminus S_t}\sum_{j=1}^{J_{t,v}}\|X_{t,j,v}\|_{M_{t-1,v}^{-1}}^2\right)} \\
	&\leq \sqrt{\left( \sum_{t=T_0+1}^{T}\sum_{v\in V}d_v n_{S_t,v}^2\right)\left( \sum_{v\in V}d_v(d_v+1)\cdot\ln\left(\frac{R+(T-T_0)d_v}{R^{d_v}}\right)\right)} \\
	&\leq \zeta(G)D\sqrt{(m+n)(T-T_0)\ln\left(R+(T-T_0)D\right)}.
\end{align*}
The last inequality holds since by the definition,
\[ \sqrt{\sum_{v\in V}n_{S_t,v}^2}\leq\zeta(G). \]

\end{document}